\title{Ergodic Secret Alignment\thanks{This work was supported by NSF Grants CCF 04-47613, CCF 05-14846, CNS 07-16311 and CCF 07-29127 and CCF 09-64645,
and presented in part at the 47th Annual Allerton Conference on
Communications, Control, and Computing, Monticello, IL, September
2009 \cite{allerton09}, and the IEEE International Conference on
Communications, Cape Town, South Africa, May 2010 \cite{ESA}.}}
\author{Raef Bassily \qquad Sennur Ulukus \\
\normalsize Department of Electrical and Computer Engineering\\
\normalsize University of Maryland, College Park, MD 20742 \\
\normalsize {\it bassily@umd.edu} \qquad {\it ulukus@umd.edu}}
\newtheorem{Theo}{Theorem}
\newenvironment{proof}[1]{\medskip\par\noindent{\bf Proof:\,}\,#1}{{\mbox{\,$\blacksquare$}\par}}
\begin{document}
\date{}
\maketitle
\begin{abstract}
In this paper, we introduce two new achievable schemes for the
fading multiple access wiretap channel (MAC-WT). In the model that
we consider, we assume that perfect knowledge of the state of all
channels is available at all the nodes in a causal fashion. Our
schemes use this knowledge together with the time varying nature
of the channel model to align the interference from different
users at the eavesdropper perfectly in a one-dimensional space
while creating a higher dimensionality space for the interfering
signals at the legitimate receiver hence allowing for better
chance of recovery. While we achieve this alignment through signal
scaling at the transmitters in our first scheme (scaling based
alignment (SBA)), we let nature provide this alignment through the
ergodicity of the channel coefficients in the second scheme
(ergodic secret alignment (ESA)). For each scheme, we obtain the
resulting achievable secrecy rate region. We show that the secrecy
rates achieved by both schemes scale with SNR as
$\frac{1}{2}\log(\mbox{SNR})$. Hence, we show the sub-optimality
of the i.i.d. Gaussian signaling based schemes with and without
cooperative jamming by showing that the secrecy rates achieved
using i.i.d. Gaussian signaling with cooperative jamming do not
scale with SNR. In addition, we introduce an improved version of
our ESA scheme where we incorporate cooperative jamming to achieve
higher secrecy rates. Moreover, we derive the necessary optimality
conditions for the power control policy that maximizes the secrecy
sum rate achievable by our ESA scheme when used solely and with
cooperative jamming.
\end{abstract}
\newpage
\section{Introduction}
The notion of information theoretic secrecy was first introduced
by Shannon in his seminal work \cite{Sh}. Applying the notion of
information theoretic secrecy to channel models with single
transmitter, single receiver, and single eavesdropper (wiretapper)
was pioneered by Wyner \cite{wyn}, Csiszar and Korner \cite{ck},
and Leung-Yan-Cheong and Hellman \cite{Hellman}. Wyner \cite{wyn},
introduced the wiretap channel where it is assumed that the
received signal by the eavesdropper is a degraded version of the
signal received by the legitimate receiver. For his model, Wyner
established the secrecy capacity region, which is defined as the
region of all simultaneously achievable rates and
equivocation-rates. In \cite{ck}, the secrecy capacity region was
established for the general case where the eavesdropper's channel
is not necessarily a degraded version of the main receiver's
channel. In particular, it was shown that to achieve the secrecy
capacity region of the single user wiretap channel, channel
prefixing may be necessary. In channel prefixing, an auxiliary
random variable serves as the input of an artificially created
prefix channel, whose output is used as the input to the original
wiretap channel. In \cite{Hellman}, the authors showed that,
through plain Gaussian signaling alone, i.e., without channel
prefixing, one can achieve the secrecy capacity of the Gaussian
wiretap channel.

The multiple access wiretap channel (MAC-WT) was introduced in
\cite{MAC21}. In MAC-WT, multiple users wish to have secure
communication with a single receiver, in the presence of a passive
eavesdropper. References \cite{MAC21} and \cite{MAC22} focus on
the Gaussian MAC-WT, and provide achievable schemes based on
Gaussian signaling. Reference \cite{MAC22} goes further than plain
Gaussian signaling and introduces a technique (on top of Gaussian
signaling) that uses the power of a non-transmitting node in
jamming the eavesdropper. This technique is called
\emph{cooperative jamming}. Cooperative jamming is indeed a
channel prefixing technique where specific choices are made for
the auxiliary random variables \cite{ekrem_ulukus_chapter}. In
addition, cooperative jamming is the first significant application
of channel prefixing in a multi-user Gaussian wiretap channel that
improves over plain Gaussian signaling. More recently, reference
\cite{MAC13} showed that for a certain class of Gaussian MAC-WT,
one can achieve through Gaussian signaling a secrecy rate region
that is within 0.5 bits of the secrecy capacity region.
Consequently, there has been some expectation that secrecy
capacity may be obtained for Gaussian MAC-WT through i.i.d.
Gaussian signaling, potentially with Gaussian channel prefixing.

However, a notable shortcoming of these Gaussian signaling based
achievable schemes is that rates obtained using them do not scale
with the signal-to-noise ratio (SNR). In other words, the total
number of degrees of freedom (DoF) for the MAC-WT achieved using
these schemes is zero. This observation led to the belief that
these schemes, and hence Gaussian signaling (with or without
channel prefixing), may be sub-optimal. This belief is made
certain as a direct consequence of the results on the secure DoF
of Gaussian interference networks that were obtained in several
papers, e.g., in \cite{int_algn3}, \cite{int_algn4},
\cite{struc1}, \cite{struc3}, and \cite{real_int_al}. In
particular, in each of \cite{int_algn3} and \cite{int_algn4}, it
was shown that positive secure DoF is achievable for a class of
vector Gaussian interference channels (i.e., time-varying channels
where channel state information is known non-causally) which in
turn implies that positive secure DoF is achievable for the vector
Gaussian MAC-WT. In \cite{struc1} and \cite{struc3}, it was shown
that through structured coding (e.g., lattice coding), it is
possible to achieve positive DoF for a class of scalar (i.e.,
non-time-varying) Gaussian channels with interference that
contains the Gaussian MAC-WT. More recently, in
\cite{real_int_al}, an achievable secrecy rate region for the
$K$-user Gaussian MAC-WT was obtained by incorporating a new
alignment technique known as real interference alignment. This
technique performs on a single real line and exploits the
properties of real numbers to align interference in time-invariant
channels.

Fading Gaussian MAC-WT was first considered in \cite{MAC23} where
the Gaussian signaling and cooperative jamming schemes which were
originally proposed in \cite{MAC21} and \cite{MAC22} are extended
to the fading MAC-WT. Using these schemes, \cite{MAC23} gave
achievable ergodic sum secrecy rates for the fading MAC-WT.
Similar to the non-fading setting, these achievable ergodic
secrecy rates do not scale with the average SNRs. In this paper,
we propose two new achievable schemes for the fading Gaussian
MAC-WT. Our first achievable scheme, the \emph{scaling based
alignment} (SBA) scheme, is based on code repetition with proper
scaling of transmitted signals. In particular, transmitters repeat
their symbols in two {\it consecutive} symbol instants.
Transmitters further scale their transmit signals with the goal of
creating a full-rank channel matrix at the main receiver and a
unit-rank channel matrix at the eavesdropper, in every two
consecutive time instants. These coordinated actions create a
two-dimensional space for the signal received by the legitimate
receiver, while sustaining the interference in a
single-dimensional space at the eavesdropper. In other words, code
repetition with proper scaling of the transmit signals at each
transmitter {\it aligns} the received signals at the eavesdropper
perfectly making it difficult for the eavesdropper to decode both
messages. Consequently, we obtain a new achievable secrecy rate
region for the two-user fading MAC-WT.

In another recent work \cite{Nazer}, it was shown that in a fading
interference channel, by code repetition over {\it properly
chosen} time instants, one can perfectly cancel interference at
each receiver so that the resulting individual rates scale as
$\frac{1}{2}\log(\mbox{SNR})$. Thus, the rate reduction by a
factor of $\frac{1}{2}$ comes with the benefit of perfect
interference cancellation. In this paper, we extend the ergodic
interference alignment concept to a secrecy context and we propose
another achievable scheme which we call {\it ergodic secret
alignment} (ESA). In the SBA scheme, code repetition is done over
two consecutive time instants, while in the ESA scheme, we
carefully choose the time instants over which we do code
repetition such that the received signals are aligned favorably at
the legitimate receiver while they are aligned unfavorably at the
eavesdropper. In particular, given some time instant with the
vector of the main receiver channel coefficients and the vector of
the eavesdropper channel coefficients given by $\textbf{h}=[h_{1}
~h_{2}]^T$ and $\textbf{g}=[g_{1} ~g_{2}]^T$, respectively, if
$X_1$ and $X_2$ are the symbols transmitted in this time instant
by users $1$ and $2$, respectively, our objective, roughly
speaking, is to determine the channel gains we should wait for to
transmit $X_1$ and $X_2$ again. In this paper, we show that, in
order to maximize achievable secrecy rates, we should wait for a
time instant in which the main receiver channel coefficients are
$[h_{1} ~-h_{2}]^T$ and the eavesdropper channel coefficients are
$[g_{1} ~g_{2}]^T$. Consequently, we obtain another achievable
secrecy rate region for the two-user fading MAC-WT.

For both proposed schemes, we show that the resulting secrecy
rates scale with SNR. Specifically, the achievable secrecy sum
rate scales as $1/2\log(\mbox{SNR})$. Moreover, we show that the
secrecy rates achieved through i.i.d. Gaussian signaling with
cooperative jamming in fading MAC-WT do not scale with SNR. The
significance of these results is that, they show that indeed
neither plain i.i.d. Gaussian signaling nor i.i.d. Gaussian
signaling with cooperative jamming is optimal for the fading
MAC-WT, and that, for high SNRs, one can achieve higher secrecy
rates by aligning interference perfectly in the eavesdropper MAC
while reducing, or cancelling, interference at the main receiver
MAC using some coordinated actions at both transmitters that
involve code repetition, i.e., a form of time-correlated (non
i.i.d.) signaling.

In fact, the achievable rate region using the second scheme, the
ESA scheme, involves two significant improvements over the one
achieved by the SBA scheme when the channel coefficients are
circularly symmetric complex Gaussian random variables.  First,
the expressions for achievable rates by the SBA scheme involve
products of the squared magnitudes of the channel coefficients.
The squared magnitudes of the channel coefficients are exponential
random variables and hence multiplying them will intuitively make
the small values of their product occur with higher probability
and the large values occur with lower probability. This in effect
reduces the achievable rates by the SBA scheme. On the other hand,
the achievable secrecy rates by the ESA scheme do not have this
drawback. In other words, by code repetition, the SBA scheme
creates two (not perfectly) correlated MAC channels to the main
receiver and two perfectly correlated MAC channels to the
eavesdropper, while the ESA scheme creates an orthogonal MAC
channel to the main receiver and two perfectly correlated MAC
channels to the eavesdropper. This fact leads to higher achievable
secrecy rates by the ESA scheme. The second improvement of the ESA
scheme with respect to the SBA scheme is that the average power
constraints associated with the ESA scheme do not involve any
channel coefficients whereas those associated with the SBA scheme
involve the gains of the eavesdropper channel which in turn result
in inefficient use of transmit powers.

In addition, we introduce an improved version of our second scheme
in which we use cooperative jamming on top of the ESA scheme to
achieve higher secrecy rates. Moreover, since the rate expressions
achieved by the ESA scheme (with and without cooperative jamming)
and their associated average power constraints are simpler than
their counterparts in the SBA scheme, we derive the necessary
conditions on the optimal power allocations that maximize the sum
secrecy rate achieved by the ESA scheme when used alone and when
used together with cooperative jamming. Since the achievable
secrecy sum rate, in general, is not a concave function in the
power allocation policy, the solution of such optimization problem
may not be unique. Hence, we obtain a power allocation policy that
satisfies the necessary (but not necessarily sufficient) KKT
conditions of optimality.

Finally, we provide numerical examples that illustrate the scaling
of the sum rates achieved by the proposed schemes with SNR and the
saturation of the secrecy sum rate achieved by the i.i.d. Gaussian
signaling scheme with cooperative jamming. We also give numerical
examples for the secrecy sum rates achieved by the ESA scheme with
and without cooperative jamming when power control is used.

\section{System Model}
We consider the two-user fading multiple access channel with an
external eavesdropper. Transmitter $k$ chooses a message $W_k$
from a set of equally likely messages
$\mathcal{W}_k=\{1,...,2^{2nR_k}\}$, $k=1,2$. Every transmitter
encodes its message into a codeword of length $2n$ symbols. The
channel output at the intended receiver and the eavesdropper are
given by
\begin{align}
  Y =& h_{1}X_{1}+h_{2}X_{2}+N\label{A1} \\
  Z =& g_{1}X_{1}+g_{2}X_{2}+N'\label{A2}
\end{align}
where, for $k=1,2$, $X_{k}$ is the input signal at transmitter
$k$, $h_{k}$ is the channel coefficient between transmitter $k$
and the intended receiver, $g_{k}$ is the channel coefficient
between transmitter $k$ and the eavesdropper. We assume a fast
fading scenario where the channel coefficients randomly vary from
one symbol to another in i.i.d. fashion. Also, we assume the
independence of all channel coefficients $h_{1}$, $h_{2}$,
$g_{1}$, and $g_{2}$. Each of the channel coefficients is a
circularly symmetric complex Gaussian random variable with
zero-mean. The variances of $h_{k}$ and $g_{k}$ are
$\sigma_{h_{k}}^2$ and $\sigma_{g_{k}}^2$, respectively. Hence,
$|h_{k}|^2$ and $|g_{k}|^2$ are exponentially distributed random
variables with mean $\sigma_{h_{k}}^2$ and $\sigma_{g_{k}}^2$,
respectively. Moreover, we assume that all the channel
coefficients are known to all the nodes in a causal fashion. In
(\ref{A1})-(\ref{A2}), $N$ and $N'$ are the independent Gaussian
noises at the intended receiver and the eavesdropper,
respectively, and are i.i.d. (in time) circularly symmetric
complex Gaussian random variables with zero-mean and
unit-variance. Moreover, we have the usual average power
constraints
\begin{align}\label{const1}
   E[|X_k|^2]&\leq \bar{P}_k,\quad k=1,2.
\end{align}
\section{Previously Known Results}\label{known_results}
Here we summarize previously known results that are relevant to
our development. For the general discrete-time memoryless MAC-WT,
the best known achievable secrecy rate region \cite{MAC21},
\cite{MAC22}, \cite{ekrem_ulukus_chapter} is given by the convex
hull of all rate pairs $(R_1,R_2)$ satisfying
\begin{align}
  R_1 &\leq I(V_1;Y|V_2)-I(V_1;Z) \label{ach1}\\
  R_2 &\leq I(V_2;Y|V_1)-I(V_2;Z) \label{ach2}\\
  R_1+R_2 &\leq I(V_1,V_2;Y)-I(V_1,V_2;Z) \label{achsum}
\end{align}
where the distribution $p(x_1,x_2,v_1,v_2,y,z)$ factors as
$p(v_1)p(x_1|v_1)p(v_2)p(x_2|v_2)p(y,z|x_1,x_2)$.

Known secrecy rate regions for the Gaussian MAC-WT can be obtained
from these expressions by appropriate selections for the involved
random variables. For instance, the Gaussian signaling based
achievable rates proposed in \cite{MAC21} are obtained by choosing
$X_1=V_1$ and $X_2=V_2$, i.e., no channel prefixing, and by
choosing $X_1$ and $X_2$ to be Gaussian with full power. On the
other hand, cooperative jamming based achievable rates proposed in
\cite{MAC22} are obtained by choosing $X_1=V_1+T_1$ and
$X_2=V_2+T_2$, and then by choosing $V_1, V_2, T_1, T_2$ to be
independent Gaussian random variables \cite{ekrem_ulukus_chapter}.
Here, $V_1$ and $V_2$ carry messages, while $T_1$ and $T_2$ are
jamming signals. The powers of $(V_1, T_1)$ and $(V_2, T_2)$
should be chosen to satisfy the power constraints of users 1 and
2, respectively. These selections yield the following achievable
rate region for the Gaussian MAC-WT \cite{MAC22}
\begin{align}
R_1&\leq \log\left(1+\frac{|h_1|^2P_1}{1+|h_1|^2Q_1+|h_2|^2Q_2}\right)-\log\left(1+\frac{|g_1|^2P_1}{1+|g_1|^2Q_1+|g_2|^2(P_2+Q_2)}\right)\\
R_2&\leq \log\left(1+\frac{|h_2|^2P_2}{1+|h_1|^2Q_1+|h_2|^2Q_2}\right)-\log\left(1+\frac{|g_2|^2P_2}{1+|g_1|^2(P_1+Q_1)+|g_2|^2Q_2}\right)\\
R_1+R_2&\leq
\log\left(1+\frac{|h_1|^2P_1+|h_2|^2P_2}{1+|h_1|^2Q_1+|h_2|^2Q_2}\right)-\log\left(1+\frac{|g_1|^2P_1+|g_2|^2P_2}{1+|g_1|^2Q_1+|g_2|^2Q_2}\right)
\end{align}
where the powers of the signals must satisfy
\begin{align}
&P_k+Q_k \leq \bar{P}_k, \quad k=1,2
\end{align}
where $P_k$ and $Q_k$ are the transmission and jamming powers,
respectively, of user $k$.

The ergodic secrecy rate region achieved by Gaussian signaling and
cooperative jamming for the fading MAC-WT can be expressed
similarly by simply including expectations over fading channel
states \cite{MAC23}
\begin{align}
&R_1\leq
E_{\textbf{h},\textbf{g}}\left\{\log\left(1+\frac{|h_1|^2P_1}{1+|h_1|^2Q_1+|h_2|^2Q_2}\right)-\log\left(1+\frac{|g_1|^2P_1}{1+|g_1|^2Q_1+|g_2|^2(P_2+Q_2)}\right)\right\}
\label{r1CJ}\\
&R_2\leq
E_{\textbf{h},\textbf{g}}\left\{\log\left(1+\frac{|h_2|^2P_2}{1+|h_1|^2Q_1+|h_2|^2Q_2}\right)-\log\left(1+\frac{|g_2|^2P_2}{1+|g_1|^2(P_1+Q_1)+|g_2|^2Q_2}\right)\right\}
\label{r2CJ}\\
&R_1+R_2\leq
E_{\textbf{h},\textbf{g}}\left\{\log\left(1+\frac{|h_1|^2P_1+|h_2|^2P_2}{1+|h_1|^2Q_1+|h_2|^2Q_2}\right)-\log\left(1+\frac{|g_1|^2P_1+|g_2|^2P_2}{1+|g_1|^2Q_1+|g_2|^2Q_2}\right)\right\}
\label{rsumCJ}
\end{align}
where $\textbf{h}=[h_1 ~h_2]^T$, $\textbf{g}=[g_1 ~g_2]^T$, and
the instantaneous powers $P_k$ and $Q_k$, which are both functions
of $\textbf{h}$ and $\textbf{g}$, satisfy
\begin{align}
E\left[P_k+Q_k\right]&\leq \bar{P}_k,~k=1,2 \label{conCJ1}
\end{align}

\section{Scaling Based Alignment (SBA)}\label{sba_scheme}
In this section, we introduce a new achievable scheme for the
fading MAC-WT. Our achievable scheme is based on code repetition
with proper scaling of the signals transmitted by each
transmitter. This is done as follows. For the channel described in
(\ref{A1})-(\ref{A2}), we use a repetition code such that each
transmitter repeats its channel input symbol twice over two {\it
consecutive} time instants. Due to code repetition, we may regard
each of the MACs to the main receiver and to the eavesdropper as a
vector MAC composed of two parallel scalar MACs, one for the
\emph{odd} time instants and the other for the \emph{even} time
instants. Consequently, we may describe the main receiver MAC
channel by the following pair of equations
\begin{align}
  Y_o &= h_{1o}X_1+h_{2o}X_2+N_o \label{Y_o1}\\
  Y_e &= h_{1e}X_1+h_{2e}X_2+N_e\label{Y_e1}
\end{align}
where, for $k=1,2$, $h_{ko},h_{ke}$ are the coefficients of the
$k$th main receiver channel in odd and even time instants,
$Y_o,Y_e$ and $N_o,N_e$ are the received signal and the noise at
the main receiver in odd and even time instants. In the same way,
we may describe the eavesdropper MAC channel by the following pair
of equations
\begin{align}
Z_o &= g_{1o}X_1+g_{2o}X_2+N'_o \label{Z_o1} \\
Z_e &= g_{1e}X_1+g_{2e}X_2+N'_e\label{Z_e1}
\end{align}
where, for $k=1,2$, $g_{ko},g_{ke}$ are the coefficients of the
$k$th eavesdropper channel in odd and even time instants,
$Z_o,Z_e$ and $N_o,N_e$ are the received signal and the noise at
the eavesdropper in odd and even time instants.

Since all the channel gains are known to all nodes in a causal
fashion, the two transmitters use this knowledge as follows. In
every symbol instant, each transmitter scales its transmit signal
with the gain of the other transmitter's channel to the
eavesdropper. That is, in every symbol duration, the first user
multiplies its channel input with $g_{2}$, the channel gain of the
second user to the eavesdropper, and the second user multiplies
its channel input with $g_{1}$, the channel gain of the first user
to the eavesdropper. Hence the main receiver MAC can be described
as
\begin{align}
  Y_o &= h_{1o}g_{2o}X_1+h_{2o}g_{1o}X_2+N_o \label{Y_o}\\
  Y_e &= h_{1e}g_{2e}X_1+h_{2e}g_{1e}X_2+N_e\label{Y_e}
\end{align}
and the eavesdropper MAC can be described as
\begin{align}
Z_o &= g_{1o}g_{2o}X_1+g_{1o}g_{2o}X_2+N'_o \label{Z_o} \\
Z_e &= g_{1e}g_{2e}X_1+g_{2e}g_{2e}X_2+N'_e\label{Z_e}
\end{align}
It is clear from (\ref{Y_o})-(\ref{Y_e}) that the space of the
received signal (without noise, i.e., high SNR) of the main
receiver over the two consecutive time instants is two-dimensional
almost surely. In other words, the channel matrix of the main
receiver vector MAC is full-rank almost surely. This is due to the
fact that the channel coefficients are drawn from continuous
bounded distributions. On the other hand, it is clear from
(\ref{Z_o})-(\ref{Z_e}) that the channel matrix of the
eavesdropper vector MAC is unit-rank. That is, the two ingredients
of our scheme, i.e., code repetition and signal scaling, let the
interfering signals at the main receiver live in a two-dimensional
space, while they {\it align} the interfering signals at the
eavesdropper in a one-dimensional space. As we will show in the
Section~\ref{DoF_section}, these properties play a central role in
achieving secrecy rates that scale with SNR. Finally, we note
that, due to signal scaling at the transmitters, the average power
constraints become
\begin{align}
  &E\left[\left(|g_{2o}|^2+|g_{2e}|^2\right)P_1\right] \leq \bar{P}_{1}\label{cons1}  \\
  &E\left[\left(|g_{1o}|^2+|g_{1e}|^2\right)P_2\right] \leq \bar{P}_{2}\label{cons2}
\end{align}
where $P_1$ and $P_2$, which are functions of the channel gains,
are the instantaneous powers of users 1 and 2, respectively.

Now, we evaluate the secrecy rate region achievable by our {\it
scaling based alignment} (SBA) scheme. Given the vector channels
(\ref{Y_o})-(\ref{Y_e}) and (\ref{Z_o})-(\ref{Z_e}), the following
secrecy rates are achievable \cite{MAC21}, \cite{MAC22},
\cite{ekrem_ulukus_chapter},
\begin{align}
  R_1 &\leq \frac{1}{2}\left[I(X_1;Y_o,Y_e|X_2,\textbf{h},\textbf{g})-I(X_1;Z_o,Z_e|\textbf{h},\textbf{g})\right] \label{r-mi-1}\\
  R_2 &\leq \frac{1}{2}\left[I(X_2;Y_o,Y_e|X_1,\textbf{h},\textbf{g})-I(X_2;Z_o,Z_e|\textbf{h},\textbf{g})\right] \label{r-mi-2}\\
  R_1+R_2 &\leq \frac{1}{2}\left[I(X_1,X_2;Y_o,Y_e|\textbf{h},\textbf{g})-I(X_1,X_2;Z_o,Z_e|\textbf{h},\textbf{g})\right] \label{r-mi-sum}
\end{align}
These expressions for achievable rates follow from
(\ref{ach1})-(\ref{achsum}) by treating channel states as outputs
at the receivers, and noting the independence of channel inputs
and channel states. We note that the factor of $1/2$ on the right
hand sides of (\ref{r-mi-1})-(\ref{r-mi-sum}) is due to repetition
coding. Now, by computing (\ref{r-mi-1})-(\ref{r-mi-sum}) with
Gaussian signals, we obtain the secrecy rate region given in the
following theorem.

\begin{Theo}\label{res_ach_reg}
For the two-user fading MAC-WT, the rate region given by all rate
pairs $(R_1,R_2)$ satisfying the following constraints is
achievable with perfect secrecy
\begin{align}
&R_1\leq\frac{1}{2}E_{\textbf{h},\textbf{g}}\left\{\log\left(1+(|h_{1o}g_{2o}|^2+|h_{1e}g_{2e}|^2)P_1\right)-\log\left(1+\frac{(|g_{1o}g_{2o}|^2+|g_{1e}g_{2e}|^2)P_1}{1+(|g_{1o}g_{2o}|^2+|g_{1e}g_{2e}|^2)P_2}\right)\right\}\label{R1}\\
&R_2\leq\frac{1}{2}E_{\textbf{h},\textbf{g}}\left\{\log\left(1+(|h_{2o}g_{1o}|^2+|h_{2e}g_{1e}|^2)P_2\right)-\log\left(1+\frac{(|g_{1o}g_{2o}|^2+|g_{1e}g_{2e}|^2)P_2}{1+(|g_{1o}g_{2o}|^2+|g_{1e}g_{2e}|^2)P_1}\right)\right\}\label{R2}\\
&R_1+R_2\leq\frac{1}{2}E_{\textbf{h},\textbf{g}}\Bigg\{\log\bigg(1+\left(|h_{1o}g_{2o}|^2+|h_{1e}g_{2e}|^2\right)P_1+\left(|h_{2o}g_{1o}|^2+|h_{2e}g_{1e}|^2\right)P_2\nonumber\\
&\hspace{5cm}+|h_{1e}h_{2o}g_{1o}g_{2e}-h_{1o}h_{2e}g_{1e}g_{2o}|^2P_1P_2\bigg)\nonumber\\
&\hspace{3.5cm}-\log\bigg(1+\left(|g_{1o}g_{2o}|^2+|g_{1e}g_{2e}|^2\right)\left(P_1+P_2\right)\bigg)\Bigg\}\label{R1R2}
\end{align}
where $\textbf{h}=[h_{1o}~h_{1e}~h_{2o}~h_{2e}]^T$,
$\textbf{g}=[g_{1o}~ g_{1e}~g_{2o}~g_{2e}]^T$, and $P_1,P_2$,
which are functions of $\textbf{h}_o=[h_{1o}~h_{2o}]^T$ and
$\textbf{g}_o=[g_{1o}~g_{2o}]^T$, are the power allocation
policies of users $1$ and $2$, respectively, that satisfy
\begin{align}
  &E\left[\left(|g_{2o}|^2+|g_{2e}|^2\right)P_1\right] \leq \bar{P}_{1}\label{ca}\\
  &E\left[\left(|g_{1o}|^2+|g_{1e}|^2\right)P_2\right] \leq \bar{P}_{2}\label{cb}
\end{align}
where $\bar{P}_1$ and $\bar{P}_2$ are the average power
constraints.
\end{Theo}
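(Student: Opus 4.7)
The plan is to apply the general MAC-WT achievable region (\ref{ach1})--(\ref{achsum}) to the effective vector MAC-WT that the scheme creates over two consecutive slots, with inputs $X_1, X_2$ (each repeated once), outputs $(Y_o, Y_e)$ at the main receiver and $(Z_o, Z_e)$ at the eavesdropper as in (\ref{Y_o})--(\ref{Z_e}), and the channel states $\textbf{h}, \textbf{g}$ treated as side information available at both receivers (so they can be moved inside the conditioning of all mutual informations). I would take $V_k=X_k$, i.e., no channel prefixing, and let $X_1,X_2$ be independent zero-mean circularly symmetric complex Gaussians with instantaneous variances $P_1,P_2$ that depend, in accordance with causality, only on $(\textbf{h}_o,\textbf{g}_o)$. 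The $1/2$ factor in (\ref{r-mi-1})--(\ref{r-mi-sum}) then accounts for the fact that one message-carrying symbol is transmitted per two channel uses, and this factor carries directly into the statement of the theorem.

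Next, I would evaluate each of the six mutual-information terms in closed form. For $I(X_1;Y_o,Y_e|X_2,\textbf{h},\textbf{g})$, conditioning on $X_2$ lets the main receiver cancel the $X_2$-contribution, leaving a SIMO channel from $X_1$ with effective vector $[h_{1o}g_{2o},\, h_{1e}g_{2e}]^T$, producing the first log term in (\ref{R1}); an analogous calculation handles the individual-rate term for $R_2$. For $I(X_1;Z_o,Z_e|\textbf{h},\textbf{g})$, I would write it as the difference of log-determinants of the output covariance with and without $X_1$ present. The crucial structural observation is that the eavesdropper's $2\times 2$ channel matrix has rank one because both of its columns equal $[g_{1o}g_{2o},\, g_{1e}g_{2e}]^T$; by the matrix determinant lemma, each determinant collapses to a scalar and the ratio reduces cleanly to the second log term in (\ref{R1}).

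The sum-rate calculation is the most delicate step. I would use $I(X_1,X_2;Y_o,Y_e|\textbf{h},\textbf{g}) = \log\det(\mathbf{I}+\mathbf{H}_{\mathrm{main}}D\mathbf{H}_{\mathrm{main}}^H)$ with $D=\mathrm{diag}(P_1,P_2)$ and
\[
\mathbf{H}_{\mathrm{main}} = \begin{pmatrix} h_{1o}g_{2o} & h_{2o}g_{1o} \\ h_{1e}g_{2e} & h_{2e}g_{1e} \end{pmatrix},
\]
and then apply the $2\times 2$ identity $\det(\mathbf{I}_2+M)=1+\mathrm{tr}(M)+\det(M)$. The trace gives the two linear-in-$P_k$ terms that appear in (\ref{R1R2}), while $\det(M)=|\det(\mathbf{H}_{\mathrm{main}})|^2 P_1 P_2$ produces exactly the cross term $|h_{1e}h_{2o}g_{1o}g_{2e}-h_{1o}h_{2e}g_{1e}g_{2o}|^2 P_1 P_2$. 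The eavesdropper sum term, again by the rank-one column structure, collapses to the single log in (\ref{R1R2}).

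The principal obstacle will be carrying out the $2\times 2$ determinant expansion cleanly enough to isolate the cross term in the correct form; once that is handled, the remaining pieces are standard Gaussian-vector-channel computations and can be inserted back into (\ref{r-mi-1})--(\ref{r-mi-sum}) term by term. The power constraints (\ref{ca})--(\ref{cb}) then follow immediately from computing $E[|g_{2o}X_1|^2+|g_{2e}X_1|^2]$ (and its analog for user~2) averaged per symbol, given the scaled transmit signals. The only conceptual point I would make explicit is the causality restriction on $P_1,P_2$: since the block symbol and its power must be committed at the odd slot before the even-slot fading is revealed, the dependence of the power allocations on $(\textbf{h}_o,\textbf{g}_o)$ only is a genuine constraint, not a choice, and this is why the theorem states the power policies as functions of $\textbf{h}_o$ and $\textbf{g}_o$ rather than the full $\textbf{h},\textbf{g}$.
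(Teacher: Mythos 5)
Your proposal follows exactly the route the paper takes: it instantiates the general MAC-WT region (\ref{ach1})--(\ref{achsum}) on the two-slot vector channel (\ref{Y_o})--(\ref{Z_e}) with channel states treated as receiver side information, evaluates the mutual informations with independent Gaussian inputs (the rank-one eavesdropper matrix collapsing the $Z$-terms, the $2\times 2$ determinant identity producing the $P_1P_2$ cross term), and attaches the factor $1/2$ for repetition and the scaled power constraints. The paper leaves these Gaussian log-determinant evaluations implicit, but your filled-in computations match the stated region term by term, so this is essentially the paper's own argument.
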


\section{Ergodic Secret Alignment (ESA)}\label{ESA_section}
After we have devised the scaling based alignment scheme, the
ergodic interference alignment scheme of Nazer {\it{et. al.}}
\cite{Nazer} inspired us to propose an improved achievable scheme.
In this section, we discuss this scheme which we call {\it ergodic
secret alignment} (ESA). The new ingredient in this scheme is to
perform repetition coding at two {\it carefully chosen} time
instances as opposed to two {\it consecutive} time instances as we
have done in Section~\ref{sba_scheme}.

For the MAC-WT described by (\ref{A1})-(\ref{A2}), we use a
repetition code in a way similar to the one in \cite{Nazer}.
Indeed, we repeat each code symbol in the time instant that holds
certain channel conditions relative to the those conditions in the
time instant where this code symbol is first transmitted. Namely,
given a time instant with the main receiver channel state vector
$\textbf{h}=[h_1 ~h_2]^T$ and the eavesdropper channel state
vector $\textbf{g}=[g_1 ~g_2]^T$, where the symbols $X_1$ and
$X_2$ are first transmitted by the two transmitters, we will solve
for the channel states $\tilde{\textbf{h}}=[\tilde{h}_1
~\tilde{h}_2]^T$ and $\tilde{\textbf{g}}=[\tilde{g}_1
~\tilde{g}_2]^T$, where these symbols should be repeated again,
such that the resulting secrecy rates achieved by Gaussian
signaling are maximized.

Due to code repetition, we may regard each of the MACs to the main
receiver and to the eavesdropper as a vector MAC composed of two
parallel scalar MACs, one for each one of the two time instants
over which the same code symbols $X_1$ and $X_2$ are transmitted.
Consequently, we may describe the main receiver MAC channel by the
following pair of equations
\begin{align}
  Y_1 &= h_1X_1+h_2X_2+N_1 \label{mainrx1}\\
  Y_2 &= \tilde{h}_1X_1+\tilde{h}_2X_2+N_2 \label{mainrx2}
\end{align}
where $Y_1,Y_2$ and $N_1,N_2$ are the received symbols and the
noise at the main receiver in the two time instants of code
repetition. In the same way, we may describe the eavesdropper MAC
channel by the following pair of equations
\begin{align}
  Z_1 &= g_1X_1+g_2X_2+N_1' \label{eve1}\\
  Z_2 &= \tilde{g}_1X_1+\tilde{g}_2X_2+N_2' \label{eve2}
\end{align}
where $Z_1,Z_2$ and $N_1',N_2'$ are the received symbols and the
noise at the eavesdropper in the two time instants of code
repetition.

In the next theorem, we give another achievable secrecy rate
region for the two-user fading MAC-WT. The achievable region is
obtained using (\ref{r-mi-1})-(\ref{r-mi-sum}) and replacing
$(Y_o,Y_e)$ and $(Z_o,Z_e)$ with $(Y_1,Y_2)$ and $(Z_1,Z_2)$,
respectively, and evaluating these expressions with Gaussian
signals, and by choosing optimal
$\tilde{\textbf{h}}=(\tilde{h}_1,\tilde{h}_2)$ and
$\tilde{\textbf{g}}=(\tilde{g}_1,\tilde{g}_2)$ to maximize the
achievable rates. In particular, we choose the repetition
instants, i.e., $\tilde{\textbf{h}}$ and $\tilde{\textbf{g}}$, in
such a way that the parallel MAC to the main receiver is the most
favorable from the main transmitter-receiver pair's point of view,
and the parallel MAC to the eavesdropper is the least favorable
from the eavesdropper's point of view. As we will show shortly as
a result of Theorem~\ref{main-res}, this optimal selection will
yield an {\it orthogonal} MAC to the main receiver and a {\it
scalar} MAC to the eavesdropper. In writing the achievable rate
expressions, we will again account for code repetition by
multiplying achievable rates by a factor of $1/2$.

\begin{Theo} \label{main-res}
For the two-user fading MAC-WT, the rate region given by all rate
pairs $(R_1,R_2)$ satisfying the following constraints is
achievable with perfect secrecy
\begin{align}
  R_1 \leq& \frac{1}{2}E_{\textbf{h},\textbf{g}}\left\{\log\left(1+2|h_1|^2P_1\right)-\log\left(1+\frac{2|g_1|^2P_1}{1+2|g_2|^2P_2}\right)\right\}\label{R1new} \\
  R_2 \leq& \frac{1}{2}E_{\textbf{h},\textbf{g}}\left\{\log\left(1+2|h_2|^2P_2\right)-\log\left(1+\frac{2|g_2|^2P_2}{1+2|g_1|^2P_1}\right)\right\}\label{R2new} \\
  R_1+R_2 \leq& \frac{1}{2}E_{\textbf{h},\textbf{g}}\left\{\log\left(1+2|h_1|^2P_1\right)+\log\left(1+2|h_2|^2P_2\right)-\log\left(1+2(|g_1|^2P_1+|g_2|^2P_2)\right)\right\}\label{Rsumnew}
\end{align}
where $P_1$ and $P_2$ are the power allocation policies of users
$1$ and $2$, respectively, and are both functions of $\textbf{h}$
and $\textbf{g}$ in general. In addition, they satisfy the average
power constraints
\begin{align}
  E[P_1] &\leq \bar{P}_1\label{con1} \\
  E[P_2] &\leq \bar{P}_2\label{con2}
\end{align}
\end{Theo}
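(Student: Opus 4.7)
The plan is to specialize the vector-MAC bounds \eqref{r-mi-1}--\eqref{r-mi-sum} by a deliberate choice of the repetition instant. Given an initial slot with state $(\textbf{h},\textbf{g})=([h_1,h_2]^T,[g_1,g_2]^T)$, I would pair it with a later slot whose state is $(\tilde{\textbf{h}},\tilde{\textbf{g}})=([h_1,-h_2]^T,[g_1,g_2]^T)$ and retransmit the same Gaussian symbols there. The existence of enough such pairs within a long coding block follows from the ergodic interference-alignment argument of \cite{Nazer}: quantize the joint channel-state space on a fine grid, match each occurrence of a cell with an occurrence of its ``mirror'' cell, and let the grid refine together with the blocklength. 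Because the complex Gaussian channel coefficients are circularly symmetric, $([h_1,-h_2]^T,[g_1,g_2]^T)$ is equi-distributed with $([h_1,h_2]^T,[g_1,g_2]^T)$, so the mirror cell occurs with the same asymptotic frequency as the original.

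With this pairing, the two parallel MACs decouple favorably. At the main receiver, \eqref{mainrx1}--\eqref{mainrx2} yield
\begin{align*}
Y_1+Y_2&=2h_1 X_1+(N_1+N_2),\\
Y_1-Y_2&=2h_2 X_2+(N_1-N_2),
\end{align*}
i.e., two orthogonal scalar channels with independent $\mathcal{CN}(0,2)$ noises. At the eavesdropper, \eqref{eve1}--\eqref{eve2} yield
\begin{align*}
Z_1+Z_2&=2g_1 X_1+2g_2 X_2+(N_1'+N_2'),\\
Z_1-Z_2&=N_1'-N_2',
\end{align*}
so $Z_1-Z_2$ is independent of the inputs and the signal components live in a one-dimensional subspace spanned by $Z_1+Z_2$.

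Next I would evaluate each conditional mutual information in \eqref{r-mi-1}--\eqref{r-mi-sum} with $X_k\sim\mathcal{CN}(0,P_k)$ independent across $k$. The orthogonal split at the main receiver gives
\begin{align*}
I(X_1;Y_1,Y_2|X_2,\textbf{h},\textbf{g})&=\log(1+2|h_1|^2 P_1),\\
I(X_2;Y_1,Y_2|X_1,\textbf{h},\textbf{g})&=\log(1+2|h_2|^2 P_2),
\end{align*}
and $I(X_1,X_2;Y_1,Y_2|\textbf{h},\textbf{g})$ equals their sum. From the scalar MAC at the eavesdropper, treating the other user as noise yields $I(X_1;Z_1,Z_2|\textbf{h},\textbf{g})=\log(1+2|g_1|^2 P_1/(1+2|g_2|^2 P_2))$ and symmetrically for $X_2$, while $I(X_1,X_2;Z_1,Z_2|\textbf{h},\textbf{g})=\log(1+2|g_1|^2 P_1+2|g_2|^2 P_2)$. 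Substituting into \eqref{r-mi-1}--\eqref{r-mi-sum}, averaging over $(\textbf{h},\textbf{g})$, and retaining the repetition factor $1/2$ produces \eqref{R1new}--\eqref{Rsumnew}. The constraints \eqref{con1}--\eqref{con2} follow because no signal scaling is performed, so the instantaneous input energy equals $P_k$.

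The main obstacle I expect is the rigorous justification of the pairing step: showing that along almost every typical realization the time axis of a long coding block can be partitioned into matched pairs with only a vanishing fraction of unmatched slots, and that the corresponding rate loss vanishes as the blocklength grows. This is handled by the Nazer-style quantization-and-matching construction combined with a uniform continuity argument for the Gaussian mutual-information functionals to absorb the residual quantization mismatch. Once that machinery is in place, the remaining evaluations are standard Gaussian MAC calculations, and the region \eqref{R1new}--\eqref{Rsumnew} follows.
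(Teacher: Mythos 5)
Your proposal is correct and follows essentially the same route as the paper: repetition coding at a time instant whose state is $([h_1,\,-h_2]^T,[g_1,\,g_2]^T)$, justified by the ergodic matching argument of \cite{Nazer}, followed by evaluation of \eqref{r-mi-1}--\eqref{r-mi-sum} with independent Gaussian inputs on the resulting orthogonal main-receiver MAC and scalar eavesdropper MAC. The only difference is that the paper first parametrizes all admissible repetition states by phase rotations and optimizes over them to arrive at this choice, whereas you assert the choice directly and verify it, which suffices for the achievability claim.
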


\begin{proof}
First, consider the two vector MACs given by
(\ref{mainrx1})-(\ref{eve2}). Observe that as in \cite{Nazer},
$\tilde{\textbf{h}}$ must be chosen such that it has the same
distribution as $\textbf{h}$ and $\tilde{\textbf{g}}$ must be
chosen such that it has the same distribution as $\textbf{g}$.
Since $\textbf{h}\sim \mathcal{CN}(\textbf{0},\textbf{B}_h)$ and
$\textbf{g}\sim \mathcal{CN}(\textbf{0},\textbf{B}_g)$ where
$\textbf{B}_h={\text{diag}}(\sigma_{h_{1}}^2,\sigma_{h_{2}}^2)$
and $\textbf{B}_g=
{\text{diag}}(\sigma_{g_{1}}^2,\sigma_{g_{2}}^2)$, then
$\tilde{\textbf{h}}$ and $\tilde{\textbf{g}}$ must be in the form
$\tilde{\textbf{h}} = \textbf{U}\textbf{h}$ and
$\tilde{\textbf{g}} = \textbf{V}\textbf{g}$ where the unitary
matrices $\textbf{U}$ and $\textbf{V}$ must further be of the
form: $\textbf{U}={\text{diag}}(\exp(j\theta_1),\exp(j\theta_2))$
and $\textbf{V}={\text{diag}}(\exp(j\omega_1),\exp(j\omega_2))$
for some $\theta_1,\theta_2,\omega_1,\omega_2\in [0,2\pi)$. Then,
it follows that (\ref{mainrx1})-(\ref{eve2}) can be written as
\begin{eqnarray}
Y_1 &=& h_1X_1+h_2X_2+N_1 \label{rx1}\\
Y_2 &=& h_1e^{j\theta_1}X_1+h_2e^{j\theta_2}X_2+N_2 \label{rx2}\\
Z_1 &=& g_1X_1+g_2X_2+N_1' \label{ev1}\\
Z_2 &=& g_1e^{j\omega_1}X_1+g_2e^{j\omega_2}X_2+N_2' \label{ev2}
\end{eqnarray}
Using (\ref{r-mi-1})-(\ref{r-mi-sum}) and replacing $(Y_o,Y_e)$
and $(Z_o,Z_e)$ with $(Y_1,Y_2)$ and $(Z_1,Z_2)$, respectively,
and computing these achievable rates with Gaussian signals, we get
\begin{align}
  R_1 \leq& \frac{1}{2}E_{\textbf{h},\textbf{g}}\left\{\log\left(1+2|h_1|^2P_1\right)-\log\left(1+\frac{2|g_1|^2P_1+2(1-\cos(\omega))|g_1|^2|g_2|^2P_1P_2}{1+2|g_2|^2P_2}\right)\right\}\label{R'1} \\
  R_2 \leq& \frac{1}{2}E_{\textbf{h},\textbf{g}}\left\{\log\left(1+2|h_2|^2P_2\right)-\log\left(1+\frac{2|g_2|^2P_2+2(1-\cos(\omega))|g_1|^2|g_2|^2P_1P_2}{1+2|g_1|^2P_1}\right)\right\}\label{R'2} \\
  R_1+R_2 \leq& \frac{1}{2}E_{\textbf{h},\textbf{g}}\Big\{\log(1+2|h_1|^2P_1+2|h_2|^2P_2+2(1-\cos(\theta))|h_1|^2|h_2|^2P_1P_2)\nonumber\\
  &\hspace{1.4cm}-\log(1+2|g_1|^2P_1+2|g_2|^2P_2+2(1-\cos(\omega))|g_1|^2|g_2|^2P_1P_2)\Big\}\label{R'sum}
\end{align}
where $\theta=\theta_2-\theta_1$ and $\omega=\omega_2-\omega_1$.

Hence, the largest achievable secrecy rate region
(\ref{R'1})-(\ref{R'sum}) is attained by choosing $\theta=\pi$ and
$\omega=0$. This can be achieved by choosing $\theta_1=0$ and
$\theta_2=\pi$ and by choosing $\omega_1=\omega_2=0$.
Consequently, we have $\tilde{\textbf{h}}=[h_1 ~-h_2]^T$ and
$\tilde{\textbf{g}}=[g_1 ~g_2]^T$. By substituting these values of
$\theta$ and $\omega$ in (\ref{R'1})-(\ref{R'sum}), we obtain the
region given by (\ref{R1new})-(\ref{Rsumnew}).
\end{proof}

Therefore, when using the ergodic secret alignment technique, the
best choice for $\tilde{h}_1$ and $\tilde{h}_2$ is such that
$\tilde{\textbf{h}}$ is orthogonal to $\textbf{h}$ and that
$\|\tilde{\textbf{h}}\|=\|\textbf{h}\|$, and the best choice for
$\tilde{g}_1$ and $\tilde{g}_2$ is such that $\tilde{\textbf{g}}$
and $\textbf{g}$ are linearly dependent and that
$\|\tilde{\textbf{g}}\|=\|\textbf{g}\|$, i.e.,
$\tilde{\textbf{g}}=\textbf{g}$. This choice makes the vector MAC
between the two transmitters and the main receiver equivalent to
an orthogonal MAC, i.e., two independent single-user fading
channels, one from each transmitter to the main receiver. This
equivalent main receiver MAC channel can be expressed as
\begin{eqnarray}
  \bar{Y}_1 &=& 2h_1X_1+\bar{N}_1\label{mainrxeq1} \\
  \bar{Y}_2 &=& 2h_2X_2+\bar{N}_2\label{mainrxeq2}
\end{eqnarray}
where $\bar{Y}_1=Y_1+Y_2$, $\bar{Y}_2=Y_1-Y_2$,
$\bar{N}_1=N_1+N_2$, and $\bar{N}_2=N_1-N_2$. Note that
$\bar{N}_1$ and $\bar{N}_2$ are independent. On the other hand,
this choice makes the vector MAC between the two transmitters and
the eavesdropper equivalent to a single scalar MAC. This
equivalent eavesdropper MAC channel can be expressed as
\begin{eqnarray}
  \bar{Z}_1 &=& 2g_1X_1+2g_2X_2+\bar{N}_1'\label{eveeq1} \\
  \bar{Z}_2 &=& \bar{N}_2'\label{eveeq2}
\end{eqnarray}
where $\bar{Z}_1=Z_1+Z_2$, $\bar{Z}_2=Z_1-Z_2$,
$\bar{N}_1'=N_1'+N_2'$, and $\bar{N}_2'=N_1'-N_2'$. Note again
that $\bar{N}_1$ and $\bar{N}_2$ are independent. Note that, here,
the second component of the eavesdropper's vector MAC is useless
for her (i.e., leaks no further information than the first
component) as it contains only noise. This selection of the
repetition channel state yields a most favorable setting for the
main receiver, and a least favorable setting for the eavesdropper.

\section{Degrees of Freedom}\label{DoF_section}
In this section, we show that the secrecy sum rates achieved by
our schemes scale with SNR as $1/2\log(\mbox{SNR})$ and that the
secrecy sum rate achieved by the cooperative jamming scheme given
in \cite{MAC23} does not scale with SNR. What we give here are
rigorous proofs for intuitive results. Since by looking at
(\ref{R1R2}) and (\ref{Rsumnew}), one can note that, if we assume
that $\bar{P}_1=\bar{P}_2=P$, then if we take $P_1=P_2=P$, as $P$
becomes large, roughly speaking, 
in (\ref{R1R2}) the first term inside the expectation grows as
$\log(P^2)$ while the second term grows as $\log(P)$ and hence the
overall expression grows as $1/2\log(P)$; and similarly, in
(\ref{Rsumnew}), all three terms inside the expectation grow as
$\log(P)$ and hence the overall expression grows as $1/2\log(P)$.
In the same way, by considering the secrecy sum rate achieved by
the cooperative jamming scheme given in (\ref{rsumCJ}), then by
referring to the power allocation policies given in \cite{MAC23},
one can also roughly say that for all channel states, as the
available average power goes to infinity, the overall expression
converges to a constant.

For simplicity, we assume symmetric average power constraints for
all schemes, i.e., we set $\bar{P}_1=\bar{P}_2=P$ in
(\ref{ca})-(\ref{cb}), (\ref{con1})-(\ref{con2}), and
(\ref{conCJ1}). We also assume that all channel gains are drawn
from continuous bounded distributions and that all channel gains
have finite variances. Let $R_{s}$ be the achievable secrecy sum
rate, then the total number of achievable secure DoF, $\eta$, is
defined as
\begin{align}
    \eta&\triangleq \lim_{P\rightarrow \infty}\frac{R_s}{\log(P)}\label{eta}
\end{align}
We start by the DoF analysis of our proposed schemes, i.e., the
SBA scheme and the ESA scheme, where we show that the sum secrecy
rates obtained by these schemes achieve $1/2$ secure DoF, then we
provide a rigorous proof for the fact that the scheme of
\cite{MAC23} which is based on i.i.d. Gaussian signaling with
cooperative jamming achieves a secrecy sum rate that does not
scale with SNR, i.e., achieves zero secure DoF.

\subsection{Secure DoF with the SBA Scheme}\label{DoF_SBA}
We make the following choices for the power allocation policies
$P_1$ and $P_2$ of the SBA scheme. We set
$P_1=\frac{1}{2\sigma_{g_{2}}^2}P$,
$P_2=\frac{1}{2\sigma_{g_{1}}^2}P$. It can be verified that these
choices satisfy the power constraints (\ref{ca})-(\ref{cb}).
Denoting the expression inside the expectation in (\ref{R1R2}) by
$f_P(\textbf{h},\textbf{g})$, the secrecy sum rate achieved using
the SBA scheme can be written as
\begin{align}
R_{s}&=\frac{1}{2}E_{\textbf{h},\textbf{g}}\left\{f_P(\textbf{h},\textbf{g})\right\}\label{SumRate}
\end{align}
Hence, the total achievable secure DoF is given by
\begin{align}
&\eta=\frac{1}{2}\lim_{P\rightarrow
\infty}E_{\textbf{h},\textbf{g}}\left[\frac{f_P(\textbf{h},\textbf{g})}{\log(P)}\right]\label{eta1}
\end{align}

Now, we show that, for the two-user fading MAC-WT, a total number
of secure DoF $\eta=1/2$ is achievable with the SBA scheme.
Towards this end, it suffices to show that the order of the limit
and the expectation in (\ref{eta1}) can be reversed. To do this,
we make use of Lebesgue dominated convergence theorem. Now, we
note that for large enough $P$,
$\frac{f_P(\textbf{h},\textbf{g})}{\log(P)}$ is upper bounded by
$\psi(\textbf{h},\textbf{g})$ where
\begin{align}
\psi(\textbf{h},\textbf{g})=&4+2\left(\log\left(1+\frac{1}{\sigma_{g_{1}}^2}\right)+\log\left(1+\frac{1}{\sigma_{g_{2}}^2}\right)\right)+\log\left(1+\frac{\sigma_{g_{1}}^2+\sigma_{g_{2}}^2}{\sigma_{g_{1}}^2\sigma_{g_{2}}^2}\right)\nonumber\\
&+3\left(\sum_{k=1}^2\log(1+|h_{ko}|^2)+\sum_{k=1}^2\log(1+|h_{ke}|^2)\right)\nonumber\\
&+4\left(\sum_{k=1}^2\log(1+|g_{ko}|^2)+\sum_{k=1}^2\log(1+|g_{ke}|^2)\right)
\end{align}
Hence, using the fact that all channel gains have finite variances
together with Jensen's inequality, we have
\begin{align}
E_{\textbf{h},\textbf{g}}\left[\psi(\textbf{h},\textbf{g})\right]&<\infty
\end{align}
Thus, by the dominated convergence theorem, we have
\begin{align}
\lim_{P\rightarrow
\infty}E_{\textbf{h},\textbf{g}}\left[\frac{f_P(\textbf{h},\textbf{g})}{\log(P)}\right]=E_{\textbf{h},\textbf{g}}\left[\lim_{P\rightarrow
\infty}\frac{f_P(\textbf{h},\textbf{g})}{\log(P)}\right]=1
\end{align}
Hence, from (\ref{eta1}), we have $\eta = 1/2$.

\subsection{Secure DoF with the ESA Scheme}
We show that the ESA scheme achieves $\eta=1/2$ secure DoF in the
two-user fading MAC-WT. Here, we also use a constant power
allocation policy for the ESA scheme where we set $P_1=P_2=P$ for
all channel states. Clearly, this constant policy satisfies the
average power constraints (\ref{con1})-(\ref{con2}). Denoting the
expression inside the expectation in (\ref{Rsumnew}) by
$\tilde{f}_P(\textbf{h},\textbf{g})$, the achievable secrecy sum
rate, $R_{s}$ is given by
\begin{align}
R_{s}&=\frac{1}{2}E_{\textbf{h},\textbf{g}}\left\{\tilde{f}_P(\textbf{h},\textbf{g})\right\}\label{Rs_ESA}
\end{align}
Hence, the total achievable secure DoF is given by
\begin{align}
&\eta=\frac{1}{2}\lim_{P\rightarrow
\infty}E_{\textbf{h},\textbf{g}}\left[\frac{\tilde{f}_P(\textbf{h},\textbf{g})}{\log(P)}\right]\label{eta2}
\end{align}
We note that for large enough $P$,
$\frac{\tilde{f}_{P}(\textbf{h},\textbf{g})}{\log(P)}\leq
\tilde{\psi}(\textbf{h},\textbf{g})$ where
\begin{align}
\tilde{\psi}(\textbf{h},\textbf{g})&=6+\log\left(1+2|h_1|^2\right)+\log\left(1+2|h_2|^2\right)+\log\left(1+2\left(|g_1|^2+|g_2|^2\right)\right)\label{psi_ESA}
\end{align}
Again, using the fact that all channel gains have finite variances
together with Jensen's inequality, we have
\begin{align}
E_{\textbf{h},\textbf{g}}\left[\tilde{\psi}(\textbf{h},\textbf{g})\right]&<\infty
\end{align}
Then, by the dominated convergence theorem, we have
\begin{align}
\lim_{P\rightarrow
\infty}E_{\textbf{h},\textbf{g}}\left[\frac{\tilde{f}_P(\textbf{h},\textbf{g})}{\log(P)}\right]=E_{\textbf{h},\textbf{g}}\left[\lim_{P\rightarrow
\infty}\frac{\tilde{f}_P(\textbf{h},\textbf{g})}{\log(P)}\right]=1
\end{align}
Hence, from (\ref{eta2}), we have $\eta = 1/2$.

\subsection{Secure DoF with i.i.d. Gaussian Signaling with CJ}
We consider the secrecy sum rate achieved by Gaussian signaling
with cooperative jamming (CJ) \cite{MAC23} in the fading MAC-WT
and show that this achievable rate does not scale with SNR. We
start with the secrecy sum rate given by the right hand side of
(\ref{rsumCJ}). According to the optimal power allocation policy
described in \cite{MAC23}, for $k=1,2$, we cannot have $P_k>0$ and
$Q_k>0$ simultaneously. Moreover, no transmission occurs when
$|h_1|\leq |g_1|$ and $|h_2|\leq |g_2|$. Consequently, according
to the relative values of the channel gains
$(|h_1|,|h_2|,|g_1|,|g_2|)$, there are three different cases left
for the instantaneous secrecy sum rate achieved using the optimum
power allocation where we omitted the case where $|h_1|\leq |g_1|$
and $|h_2|\leq |g_2|$ since no transmission is allowed.

\textbf{Case 1:} $(\textbf{h},\textbf{g})\in\mathcal{D}_1$ where
$\mathcal{D}_1=\big\{(\textbf{h},\textbf{g}):|h_1|>|g_1|,|h_2|>|g_2|\big\}$.
Consequently, $Q_1=Q_2=0$. Thus, the instantaneous secrecy sum
rate, $R_{s}(\textbf{h},\textbf{g})$, can be written as
\begin{align}
R_{s}(\textbf{h},\textbf{g})=
   \log\left(\frac{1+|h_1|^2P_1+|h_2|^2P_2}{1+|g_1|^2P_1+|g_2|^2P_2}\right)\label{case1}
\end{align}
We can upper bound $R_{s}(\textbf{h},\textbf{g})$ as
\begin{align}
  R_{s}(\textbf{h},\textbf{g})&\leq \log\left(1+\frac{|h_1|^2}{|g_1|^2}+\frac{|h_2|^2}{|g_2|^2}\right)\leq\log\left(1+\frac{|h_1|^2}{|g_1|^2}\right)+\log\left(1+\frac{|h_2|^2}{|g_2|^2}\right)\label{casei}
\end{align}

\textbf{Case 2:} $(\textbf{h},\textbf{g})\in\mathcal{D}_2$ where
$\mathcal{D}_2=\big\{(\textbf{h},\textbf{g}):|h_1|>|g_1|,|h_2|<|g_2|\big\}$.
Consequently, $Q_1=P_2=0$. Thus, the instantaneous secrecy sum
rate, $R_{s}(\textbf{h},\textbf{g})$, can be written as
\begin{align}
   &R_{s}(\textbf{h},\textbf{g})=\log\left(\frac{1+|h_1|^2P_1+|h_2|^2Q_2}{1+|g_1|^2P_1+|g_2|^2Q_2}\right)+\log\left(\frac{1+|g_2|^2Q_2}{1+|h_2|^2Q_2}\right)\label{case2}
\end{align}
We can upper bound $R_{s}(\textbf{h},\textbf{g})$ as
\begin{align}
    R_{s}(\textbf{h},\textbf{g})\leq 1+\log\left(1+\frac{|h_1|^2}{|g_1|^2}\right)
    +\log\left(1+\frac{|g_2|^2}{|h_2|^2}\right)\label{caseii}
\end{align}

\textbf{Case 3:} $(\textbf{h},\textbf{g})\in\mathcal{D}_3$ where
$\mathcal{D}_3=\big\{(\textbf{h},\textbf{g}):|h_1|<|g_1|,|h_2|>|g_2|\big\}$.
Consequently, $P_1=Q_2=0$. Thus, the instantaneous secrecy sum
rate, $R_{s}(\textbf{h},\textbf{g})$, can be written as
\begin{align}
 &R_{s}(\textbf{h},\textbf{g})=\log\left(\frac{1+|h_1|^2Q_1+|h_2|^2P_2}{1+|g_1|^2Q_1+|g_2|^2P_2}\right)+\log\left(\frac{1+|g_1|^2Q_1}{1+|h_1|^2Q_1}\right)\label{case3}
\end{align}
We can upper bound $R_{s}(\textbf{h},\textbf{g})$ as
\begin{align}
  R_{s}(\textbf{h},\textbf{g})\leq 1+\log\left(1+\frac{|h_2|^2}{|g_2|^2}\right)
  +\log\left(1+\frac{|g_1|^2}{|h_1|^2}\right)\label{caseiii}
\end{align}

Now, since the instantaneous sum rate is zero outside
\mbox{$\mathcal{D}_1\cup\mathcal{D}_2\cup\mathcal{D}_3$}, then
from (\ref{casei}), (\ref{caseii}), and (\ref{caseiii}), the
ergodic secrecy sum rate, $R_s$, can be upper bounded as follows
\begin{align}
R_{s}&\leq\int_{\mathcal{D}_1}\left(\log\left(1+\frac{|h_1|^2}{|g_1|^2}\right)+\log\left(1+\frac{|h_2|^2}{|g_2|^2}\right)\right)d\textbf{F}\nonumber\\
&\hspace{0.1cm}+\int_{\mathcal{D}_2}\left(1+\log\left(1+\frac{|h_1|^2}{|g_1|^2}\right)+\log\left(1+\frac{|g_2|^2}{|h_2|^2}\right)\right)d\textbf{F}\nonumber\\
&\hspace{0.1cm}+\int_{\mathcal{D}_3}\left(1+\log\left(1+\frac{|h_2|^2}{|g_2|^2}\right)+\log\left(1+\frac{|g_1|^2}{|h_1|^2}\right)\right)d\textbf{F}\label{Rsum22}
\end{align}
where
\begin{align}
d\textbf{F}=\prod_{k=1}^2f(|h_k|^2)f(|g_k|^2)d|h_k|^2d|g_k|^2
\end{align}
where, for $k=1,2$, $f(|h_k|^2)$ and $f(|g_k|^2)$ are the density
functions of $|h_k|^2$ and $|g_k|^2$, respectively. Now, since
$E[|h_k|^2]<\infty,\hspace{0.05cm}E[|g_k|^2]<\infty$ for $k=1,2$,
$|\int_0^1\log(x)dx|=\log(e)<\infty$,
$|\int_0^1\log(1+x)dx|=2-\log(e)<\infty$, and
$f(|h_k|^2),f(|g_k|^2)$ are continuous and bounded for $k=1,2$, it
follows that each of the three integrals in the above expression
is finite. Hence, we have $R_{s}<\infty$, and that $R_{s}$ is
bounded from above by a constant. Thus, from definition
(\ref{eta}) of the achievable secure DoF, $\eta$, we have
\begin{align}
&\eta=\lim_{P\rightarrow\infty}\frac{R_{s}}{\log(P)}=0
\end{align}

\section{ESA Scheme with Cooperative Jamming}
The result given in Theorem~\ref{main-res} can be strengthened by
adding the technique of cooperative jamming to the ESA scheme of
Section~\ref{ESA_section}. We refer to the resulting scheme as
ESA/CJ. This is done through Gaussian channel prefixing as
discussed in Section~\ref{known_results}. In particular, we choose
the channel inputs in (\ref{mainrx1})-(\ref{eve2}) to be
$X_1=V_1+T_1$ and $X_2=V_2+T_2$, and then choose $V_1, V_2, T_1,
T_2$ to be independent Gaussian random variables. Here, $V_1$ and
$V_2$ carry messages, while $T_1$ and $T_2$ are jamming signals.
The powers of $(V_1, T_1)$ and $(V_2, T_2)$ should be chosen to
satisfy the average power constraints of users 1 and 2,
respectively. These selections when made in the ESA scheme yield
the following achievable rate region which, through appropriate
power control strategy (see Section~\ref{optESACJ}), can be made
strictly larger than the region given in Theorem~\ref{main-res},
\begin{align}
 R_1 &\leq \frac{1}{2}E_{\textbf{h},\textbf{g}}\Bigg\{\log\left(1+\frac{2|h_1|^2P_1}{1+2|h_1|^2Q_1}\right)-\log\left(1+\frac{2|g_1|^2P_1}{1+2|g_1|^2Q_1+2|g_2|^2(P_2+Q_2)}\right)\Bigg\}\label{RCJ1}\\
 R_2 &\leq \frac{1}{2}E_{\textbf{h},\textbf{g}}\Bigg\{\log\left(1+\frac{2|h_2|^2P_2}{1+2|h_2|^2Q_2}\right)-\log\left(1+\frac{2|g_2|^2P_2}{1+2|g_1|^2(P_1+Q_1)+2|g_2|^2Q_2}\right)\Bigg\}\label{RCJ2}\\
 R_1+R_2 &\leq\frac{1}{2}E_{\textbf{h},\textbf{g}}\Bigg\{\log\left(1+\frac{2|h_1|^2P_1}{1+2|h_1|^2Q_1}\right)+\log\left(1+\frac{2|h_2|^2P_2}{1+2|h_2|^2Q_2}\right)\nonumber\\
  &\hspace{1.85cm}-\log\left(1+\frac{2(|g_1|^2P_1+|g_2|^2P_2)}{1+2(|g_1|^2Q_1+|g_2|^2Q_2)}\right)\Bigg\}\label{RsumCJ}
\end{align}
where, for $k=1,2,$ $P_k$ and $Q_k$ are the transmission and
jamming powers, respectively, of user $k$, and are both functions
of $\textbf{h}$ and $\textbf{g}$ in general. In addition, they
satisfy the average power constraints
\begin{align}
  E[P_k+Q_k] &\leq \bar{P}_k,~k=1,2\label{conCJ2}
\end{align}

\section{Maximizing Secrecy Sum Rate of the ESA Scheme}\label{optESA}
In this section, we consider the problem of maximizing the secrecy
sum rate achieved by the ESA scheme as a function of the power
allocations $P_1$ and $P_2$ of users $1$ and $2$, respectively.
For notational convenience, we replace $2|h_k|^2$ and $2|g_k|^2$
in the achievable rates (\ref{R1new})-(\ref{Rsumnew}) by $h_k$ and
$g_k$, respectively. Then, we define
$\textbf{h}\triangleq[h_1\quad h_2]^T$ and
$\textbf{g}\triangleq[g_1\quad g_2]^T$. The achievable secrecy sum
rate is given by
\begin{align}
R_{s}
=&\frac{1}{2}E_{\textbf{h},\textbf{g}}\{\log\left(1+h_1P_1\right)+\log\left(1+h_2P_2\right)-\log\left(1+g_1P_1+g_2P_2\right)\}
\end{align}
We can write the optimization problem as
\begin{align}
  \max\hspace{0.5cm}&E_{\textbf{h},\textbf{g}}\{\log\left(1+h_1P_1\right)+\log\left(1+h_2P_2\right)-\log\left(1+g_1P_1+g_2P_2\right)\}\label{objective}\\
  {\text{s.t.}}\hspace{0.5cm}&E_{\textbf{h},\textbf{g}}\left[P_k(\textbf{h},\textbf{g})\right] \leq \bar{P}_k,\quad k=1,2\label{opt_const1} \\
  \hspace{1cm} &P_k(\textbf{h},\textbf{g}) \geq 0,\quad k=1,2,\quad \forall \textbf{h},\textbf{g}\label{opt_const2}
\end{align}
The necessary KKT optimality conditions are
\begin{align}
    \frac{h_1}{1+h_1P_1}-\frac{g_1}{1+g_1P_1+g_2P_2}-(\lambda_1-\mu_1)&=0\label{lagr1}\\
    \frac{h_2}{1+h_2P_2}-\frac{g_2}{1+g_1P_1+g_2P_2}-(\lambda_2-\mu_2)&=0\label{lagr2}
\end{align}
It should be noted here that (\ref{lagr1})-(\ref{lagr2}) are only
necessary conditions for the optimal power allocations $P_1$ and
$P_2$ since the objective function, i.e., the achievable secrecy
sum rate, is not concave in $(P_1,P_2)$ in general.

For each channel state, we distinguish between three non-zero
forms that the solution $(P_1,P_2)$ of (\ref{lagr1})-(\ref{lagr2})
may take. First, if $P_1>0$ and $P_2>0$, then $\mu_1=\mu_2=0$.
Hence $(P_1,P_2)$ is the positive common root of the following two
quadratic equations
\begin{align}
    h_1\left(1+g_2P_2\right)-g_1&=\lambda_1\left(1+h_1P_1\right)\left(1+g_1P_1+g_2P_2\right)\label{optpospower1}\\
    h_2\left(1+g_1P_1\right)-g_2&=\lambda_2\left(1+h_2P_2\right)\left(1+g_1P_1+g_2P_2\right)\label{optpospower2}
\end{align}
Since it is hard to find a simple closed-form solution for the
above system of equations, we solve this system numerically and
obtain the positive common root $(P_1,P_2)$. Secondly, if $P_1>0$
and $P_2=0$, then $\mu_1=0$. Hence, from (\ref{lagr1}), $P_1$ is
given by
\begin{align}
P_1&=\frac{1}{2}\left(\sqrt{\left(\frac{1}{g_1}-\frac{1}{h_1}\right)^2+\frac{4}{\lambda_1}\left(\frac{1}{g_1}-\frac{1}{h_1}\right)}-\left(\frac{1}{g_1}+\frac{1}{h_1}\right)\right)\label{power1onlypos}
\end{align}
Thirdly, if $P_1=0$ and $P_2>0$, then $\mu_2=0$. Hence, from
(\ref{lagr2}), $P_2$ is given by
\begin{align}
P_2&=\frac{1}{2}\left(\sqrt{\left(\frac{1}{g_2}-\frac{1}{h_2}\right)^2+\frac{4}{\lambda_2}\left(\frac{1}{g_2}-\frac{1}{h_2}\right)}-\left(\frac{1}{g_2}+\frac{1}{h_2}\right)\right)\label{power2onlypos}
\end{align}
From conditions (\ref{lagr1})-(\ref{lagr2}), we can derive the
following necessary and sufficient conditions for the positivity
of the optimal power allocation policies:
\begin{align}
P_1>0,\qquad&\text{if and only if}\quad h_1
-\frac{g_1}{\left(1+g_2P_2\right)}>\lambda_1\label{necsufcondpospower1}\\
P_2>0,\qquad&\text{if and only if}\quad
h_2-\frac{g_2}{\left(1+g_1P_1\right)}>\lambda_2\label{necsufcondpospower2}
\end{align}
Consequently, according to conditions
(\ref{necsufcondpospower1})-(\ref{necsufcondpospower2}), we can
divide the set of all possible channel states into $7$ partitions
such that in each partition the solution $(P_1,P_2)$ will either
have one of the three forms stated above or will be zero. Hence,
the power allocation policy $(P_1,P_2)$ that
satisfies (\ref{lagr1})-(\ref{lagr2}) and
(\ref{opt_const1})-(\ref{opt_const2}) can be fully described in
$7$ different cases of the channel gains. The details of such cases
are given in Appendix~\ref{app1}.

\section{Maximizing Secrecy Sum Rate of the ESA/CJ Scheme}\label{optESACJ}
In this section, we consider the problem of maximizing the
achievable secrecy sum rate as a function in the power allocation
policies $P_1$ and $P_2$ when cooperative jamming technique is
used on top of the ESA scheme. Again, for notational convenience,
we replace $2|h_k|^2$ and $2|g_k|^2$ in the achievable rates
(\ref{RCJ1})-(\ref{RsumCJ}) by $h_k$ and $g_k$, respectively.
Then, we define $\textbf{h}\triangleq[h_1\quad h_2]^T$ and
$\textbf{g}\triangleq[g_1\quad g_2]^T$. In this case, the
optimization problem is described as
\begin{align}
  \max&\hspace{0.5cm}E_{\textbf{h},\textbf{g}}\big\{\log\left(1+h_1(P_1+Q_1)\right)+\log\left(1+h_2(P_2+Q_2)\right)\nonumber\\
  &\hspace{1.6cm}-\log\left(1+g_1(P_1+Q_1)+g_2(P_2+Q_2)\right)+\log\left(1+g_1Q_1+g_2Q_2\right)\nonumber\\
  &\hspace{1.6cm}-\log\left(1+h_1Q_1\right)-\log\left(1+h_2Q_2\right)\big\}\label{objectiveCJ}\\
  {\text{s.t.}}&\hspace{0.5cm}E_{\textbf{h},\textbf{g}}\left[P_k(\textbf{h},\textbf{g})+Q_k(\textbf{h},\textbf{g})\right] \leq \bar{P}_k,\quad k=1,2\label{opt_constCJ1} \\
  &\hspace{0.5cm} P_k(\textbf{h},\textbf{g}),Q_k(\textbf{h},\textbf{g}) \geq 0,\quad k=1,2,~\forall \textbf{h},\textbf{g} \label{opt_constCJ2}
\end{align}

We first show that, at any fading state, splitting a user's power
into transmission and jamming is suboptimal, i.e., an optimum
power allocation policy must not have $P_k>0$ and $Q_k>0$
simultaneously. We note that whether we split powers or not does
not affect the first three terms of the objective function since
we can always convert jamming power of user $k$ into transmission
power of the same user and vice versa while keeping the sum
$P_k+Q_k$ fixed. Hence, we consider the last three terms of the
sum rate. For convenience, we define
\begin{align}
S&=\log\left(1+g_1Q_1+g_2Q_2\right)-\log\left(1+h_1Q_1\right)-\log\left(1+h_2Q_2\right)
\end{align}
Consider, without loss of generality, the power allocation for
user $1$. We assume that $P_1^{\ast},Q_1^{\ast}$ is the optimum
power allocation for user $1$. We observe that the sign of
\begin{align}
\frac{\partial S}{\partial
Q_1}&=\frac{g_1}{1+g_1Q_1+g_2Q_2}-\frac{h_1}{1+h_1Q_1}\label{powersplitnotopt}
\end{align}
does not depend on $Q_1$. Consider a power allocation
$P_1=P_1^{\ast}-\alpha, Q_1=Q_1^{\ast}+\alpha$. Hence, we have
$P_1+Q_1=P_1^{\ast}+Q_1^{\ast}$ and the first three terms in the
expression of the achievable sum rate do not change. On the other
hand, if (\ref{powersplitnotopt}) is positive, any positive
$\alpha$ results in an increase in the achievable sum rate and
jamming with the same sum power is better. While, if
(\ref{powersplitnotopt}) is negative, then any negative $\alpha$
results in an increase in the achievable sum rate and transmitting
with the same sum power is better. If (\ref{powersplitnotopt}) is
zero, then the sum rate does not depend on $Q_1$ and we can set it
to zero, i.e., use the sum power in transmitting. Therefore, the
optimum power allocation will have either $P_k>0$ or $Q_k>0$, but
not both.

Suppose that $P_1,P_2,Q_1,$ and $Q_2$ are the optimal power
allocations. Then, the necessary KKT conditions satisfy
\begin{align}
&\frac{h_1}{1+h_1(P_1+Q_1)}-\frac{g_1}{1+g_1(P_1+Q_1)+g_2(P_2+Q_2)}-(\lambda_1-\mu_1)=0\label{lagrCJ1}\\
&\frac{h_2}{1+h_2(P_2+Q_2)}-\frac{g_2}{1+g_1(P_1+Q_1)+g_2(P_2+Q_2)}-(\lambda_2-\mu_2)=0\label{lagrCJ2}\\
&\frac{h_1}{1+h_1(P_1+Q_1)}-\frac{g_1}{1+g_1(P_1+Q_1)+g_2(P_2+Q_2)}+\frac{g_1}{1+g_1Q_1+g_2Q_2}\nonumber\\
&-\frac{h_1}{1+h_1Q_1}-(\lambda_1-\nu_1)=0\label{lagrCJ3}\\
&\frac{h_2}{1+h_2(P_2+Q_2)}-\frac{g_2}{1+g_1(P_1+Q_1)+g_2(P_2+Q_2)}+\frac{g_2}{1+g_1Q_1+g_2Q_2}\nonumber\\
&-\frac{h_2}{1+h_2Q_2}-(\lambda_2-\nu_2)=0\label{lagrCJ4}
\end{align}
As in Section~\ref{optESA}, we note that
(\ref{lagrCJ1})-(\ref{lagrCJ4}) are only necessary conditions for
the optimal power allocations $P_1,P_2,Q_1,$ and $Q_2$ since the
objective function, i.e., the achievable secrecy sum rate, is not
concave in $(P_1,P_2,Q_1,Q_2)$ in general. Therefore, we give
power control policies $P_1,P_2,Q_1,$ and $Q_2$ that satisfy these
necessary conditions. That is, we obtain one fixed point
$(P_1,P_2,Q_1,Q_2)$ of the Lagrangian such that
$(P_1,P_2,Q_1,Q_2)$ satisfies the constraints
(\ref{opt_constCJ1})-(\ref{opt_constCJ2}). The power allocation
policy $(P_1,P_2,Q_1,Q_2)$ that satisfies
(\ref{lagrCJ1})-(\ref{lagrCJ4}) and
(\ref{opt_constCJ1})-(\ref{opt_constCJ2}) is described in detail
in Appendix~\ref{app2}.

\section{Numerical Results}
In this section, we present some simple simulation results. We
also plot the sum secrecy rate achieved using our SBA and ESA
schemes, as well as the i.i.d. Gaussian signaling with cooperative
jamming (GS/CJ) scheme in \cite{MAC23}. First, the secrecy sum
rates achieved by the SBA and the ESA schemes scale with SNR.
Hence, these rates exceed the one achieved by the GS/CJ scheme for
high SNR. Second, the secrecy sum rate achieved by the ESA scheme
is larger than the one achieved by the SBA scheme for all SNR.

In our first set of simulations, we use a rudimentary power
allocation policy for our SBA and ESA schemes. For the SBA scheme,
we first note, from (\ref{R1R2}), that the secrecy sum rate
achieved can be expressed as a nested expectation as
\begin{align}
R_s=&\frac{1}{2}E_{\textbf{h}_o,\textbf{g}_o}\Bigg\{E_{\textbf{h}_e,\textbf{g}_e}\Bigg[\log\bigg(1+\left(|h_{1o}g_{2o}|^2+|h_{1e}g_{2e}|^2\right)P_1+\left(|h_{2o}g_{1o}|^2+|h_{2e}g_{1e}|^2\right)P_2\nonumber\\
&\hspace{4cm}+|h_{1e}h_{2o}g_{1o}g_{2e}-h_{1o}h_{2e}g_{1e}g_{2o}|^2P_1P_2\bigg)\nonumber\\
&\hspace{3cm}-\log\bigg(1+\left(|g_{1o}g_{2o}|^2+|g_{1e}g_{2e}|^2\right)\left(P_1+P_2\right)\bigg)\Bigg]\Bigg\}
\label{nested-exp}
\end{align}
where $\textbf{h}_o=[h_{1o}~h_{2o}]^T$,
$\textbf{h}_e=[h_{1e}~h_{2e}]^T$,
$\textbf{g}_o=[g_{1o}~g_{2o}]^T$, and
$\textbf{g}_e=[g_{1e}~g_{2e}]^T$. For those channel gains
$\textbf{h}_o,\textbf{g}_o$ for which the inner expectation with
respect to $\textbf{h}_e,\textbf{g}_e$ is negative, we set
$P_1=P_2=0$. Otherwise, we set
$P_1=\frac{1}{2\sigma_{g}^2}\bar{P}_1$ and
$P_2=\frac{1}{2\sigma_{g}^2}\bar{P}_2$. Note that turning off the
powers for some values of the channel gains
$\textbf{h}_o,\textbf{g}_o$ is possible since $P_1$ and $P_2$ are
functions of $\textbf{h}_o$ and $\textbf{g}_o$. Secondly, note
that, if a power allocation satisfies the average power
constraints, then the modified power allocation where the powers
are turned off at some channel states, also satisfies the power
constraints. For the ESA scheme, we first note, from
(\ref{Rsumnew}), that the achievable secrecy sum rate is
\begin{align}
R_s=&\frac{1}{2}E_{\textbf{h},\textbf{g}}\left\{\log\left(1+2|h_1|^2P_1\right)+\log\left(1+2|h_2|^2P_2\right)-\log\left(1+2(|g_1|^2P_1+|g_2|^2P_2)\right)\right\}
\end{align}
In this case, we set $P_1=P_2=0$ for those values of channel gains
for which the difference inside the expectation is negative.
Otherwise, we set $P_1=\bar{P}_1$ and $P_2=\bar{P}_2$. Again,
turning the powers off does not violate power constraints for a
power allocation scheme which already satisfies the power
constraints. For the GS/CJ scheme, we use the power allocation
scheme described in \cite{MAC23}.

\begin{figure}[t]
\centerline{\includegraphics[width=4.5in]{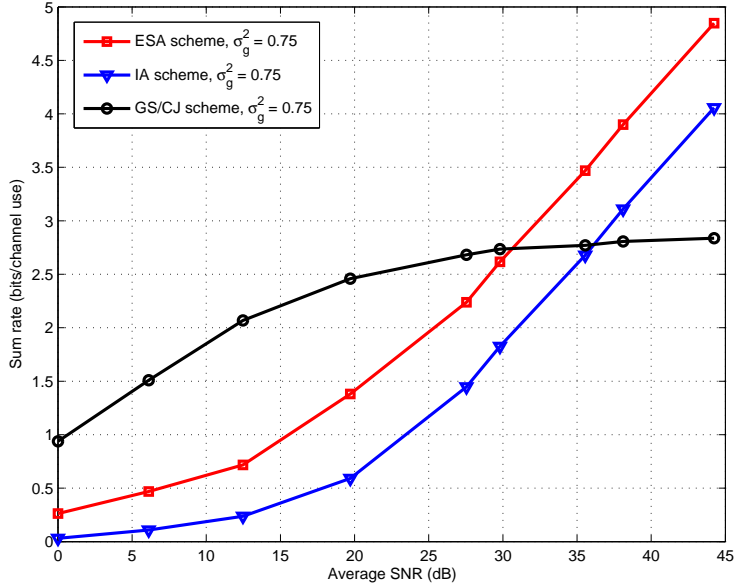}}
\caption{Achievable secrecy sum-rates of the scaling based
alignment scheme (SBA scheme) of this paper, the ergodic secret
alignment scheme (ESA scheme) of this paper, and the i.i.d.
Gaussian signaling with cooperative jamming scheme (GS/CJ scheme)
of \cite{MAC23}, as function of the SNR for two different values
of mean eavesdropper channel gain, $\sigma_g^2$.}\label{3schemes}
\end{figure}

In Figure \ref{3schemes}, the secrecy sum rate achieved by each of
the three schemes is plotted versus the average SNR that we define
as $\frac{1}{2}(\bar{P}_1+\bar{P}_2)$. In all simulations, we set
$\sigma_{h_{1}}^2=\sigma_{h_{2}}^2=1.0$, we also take
$\sigma_{g_{1}}^2=\sigma_{g_{2}}^2=0.75$.
\begin{figure}[t]
\centerline{\includegraphics[width=4.5in]{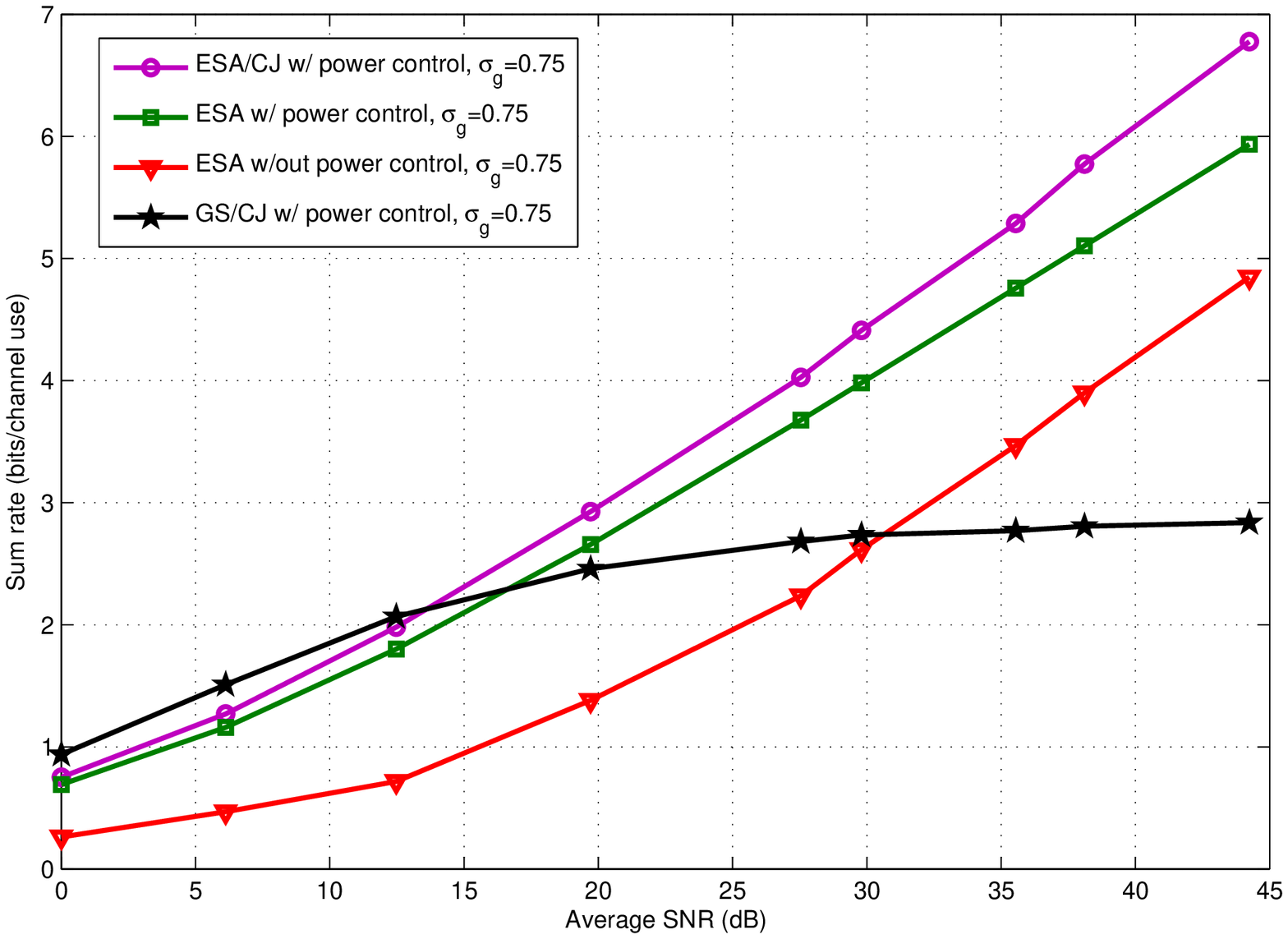}}
\caption{Achievable secrecy sum rates for the ergodic secret
alignment scheme (ESA scheme) of this paper, with and without
power control, the ergodic secret alignment with cooperative
jamming scheme (ESA/CJ scheme) of this paper with power control,
and the i.i.d. Gaussian signaling with cooperative jamming scheme
(GS/CJ scheme) of \cite{MAC23}, as function of the SNR for two
different values of mean eavesdropper channel gain,
$\sigma_g^2$.}\label{ESA_PC_CJ}
\end{figure}

Next, in Figure~\ref{ESA_PC_CJ}, we plot secrecy sum rates
achievable with constant power allocation together with secrecy
sum rates achievable with power control for the ESA scheme with
and without cooperative jamming.

\section{Conclusions}
In this paper, we proposed two new achievable schemes for the
fading multiple access wiretap channel. Our first scheme, the
scaling based alignment (SBA) scheme, lets the interfering signals
at the main receiver live in a two-dimensional space, while it
aligns the interfering signals at the eavesdropper in a
one-dimensional space. We obtained the secrecy rate region
achieved by this scheme. We showed that the secrecy rates achieved
by this scheme scale with SNR as $1/2\log(\mbox{SNR})$, i.e., a
total of $1/2$ secure DoF is achievable in the two-user fading
MAC-WT. We also showed that the secrecy sum rate achieved by the
i.i.d. Gaussian signaling with cooperative jamming scheme does not
scale with SNR, i.e., the achievable secure DoF is zero. As a
direct consequence, we showed the sub-optimality of the i.i.d.
Gaussian signaling based schemes with or without cooperative
jamming in the fading MAC-WT.

Our second scheme, the ergodic secret alignment (ESA) scheme, is
inspired by the ergodic interference alignment technique. In this
scheme each transmitter repeats its symbols over carefully chosen
time instants such that the interfering signals from the
transmitters are aligned favorably at the main receiver while they
are aligned unfavorably at the eavesdropper. We obtained the
secrecy rate region achieved by this scheme and showed that, as in
the scaling based alignment scheme, the secrecy sum rate achieved
by the ergodic secret alignment scheme scales with SNR as
$1/2\log(\mbox{SNR})$. In addition, we introduced an improved
version of our ESA scheme where cooperative jamming is used as an
additional ingredient to achieve higher secrecy rates. Moreover,
since the rate expressions achieved with the SBA scheme seem
complicated, while the rate expressions achieved with the two
versions of the ESA scheme (with and without cooperative jamming)
are more amenable for optimization of power allocations, we
derived the necessary conditions for the optimal power allocation
that maximizes the secrecy sum rate achieved by the ESA scheme
when used solely and when used with cooperative jamming.

\appendix
\appendixpage

\section{Power Control for the ESA Scheme}\label{app1}
Here, we discuss the cases of the power allocation policy of
Section~\ref{optESA}.

\begin{enumerate}
\item $h_1\leq\lambda_1,h_2-g_2\leq\lambda_2$\quad or\quad
$h_1-g_1\leq\lambda_1,h_2\leq\lambda_2$. In this case,
$P_1=P_2=0$. To prove this, suppose without loss of generality
that $h_1\leq\lambda_1,h_2-g_2\leq\lambda_2$. We note that
$h_1\leq\lambda_1$ implies that
$h_1-\frac{g_1}{\left(1+g_2P_2\right)}\leq\lambda_1$ which, using
(\ref{necsufcondpospower1}), implies that $P_1=0$. Hence, from
(\ref{necsufcondpospower2}), we must also have $P_2=0$. In the
same way, we can show that when
$h_1-g_1\leq\lambda_1,h_2\leq\lambda_2$, we also must have
$P_1=P_2=0$.

\item $h_1\leq\lambda_1,h_2-g_2>\lambda_2$. In this case, $P_1=0$
and $P_2>0$ where $P_2$ is \mbox{given by (\ref{power2onlypos}).}
As in the previous case, $h_1\leq\lambda_1$, using
(\ref{necsufcondpospower1}), implies that $P_1=0$. Hence, from
(\ref{necsufcondpospower2}), we must have $P_2>0$.

\item $h_1-g_1>\lambda_1,h_2\leq\lambda_2$. In this case, $P_1>0$
and $P_2=0$ where $P_1$ is \mbox{given by (\ref{power1onlypos}).}
This case is the same as the previous one with roles of users $1$
and $2$ interchanged.

\item
$\lambda_1<h_1\leq\lambda_1+g_1,\lambda_2<h_2\leq\lambda_2+g_2$.
In this case, the solution $(P_1,P_2)$ may not be unique. Namely,
we either have $P_1>0$ and $P_2>0$, or we have $P_1=P_2=0$. This
is due to the following facts. It is easy to see that $P_1=P_2=0$
satisfies $h_1-\frac{g_1}{\left(1+g_2P_2\right)}\leq\lambda_1$ and
$h_2-\frac{g_2}{\left(1+g_1P_1\right)}\leq\lambda_2$, i.e.,
satisfies conditions (\ref{necsufcondpospower1}) and
(\ref{necsufcondpospower2}). It is also easy to see that we can
find positive $P_1$ and $P_2$ such that
$h_1-\frac{g_1}{\left(1+g_2P_2\right)}>\lambda_1$ and
$h_2-\frac{g_2}{\left(1+g_1P_1\right)}>\lambda_2$, i.e., there
exist positive $P_1$ and $P_2$ that satisfy
(\ref{necsufcondpospower1}) and (\ref{necsufcondpospower2}). Hence
the solution $(P_1,P_2)$ may not be unique. It remains to show
that we cannot have $P_1>0, P_2=0$ or $P_1=0,P_2>0$. Suppose
without loss of generality that $P_1>0,P_2=0$. Hence, we have
$h_1-\frac{g_1}{\left(1+g_2P_2\right)}=h_1-g_1\leq\lambda_1$ which
implies that $P_1=0$ which is a contradiction. Thus, we cannot
have $P_1>0,P_2=0$. In the same way, it can be shown that we
cannot have $P_1=0,P_2>0$. Hence, we obtain our power allocation
policy for this case as follows. We examine the solution of
equations (\ref{optpospower1})-(\ref{optpospower2}), if it yields
a real and non-negative solution $(P_1,P_2)$\footnote{Note that
there is at most one such common root for these two quadratic
equations.}, then we take it as our solution $(P_1,P_2)$ for this
case. Otherwise, we set $P_1=P_2=0$.

\item $\lambda_1<h_1\leq\lambda_1+g_1,h_2-g_2>\lambda_2$. In this
case, we must have $P_2>0$. However, we either have $P_1>0$\quad
or \quad$P_1=0$. This can be shown as follows. We note that
$h_2-g_2>\lambda_2$ implies that
$h_2-\frac{g_2}{\left(1+g_1P_1\right)}>\lambda_2$ for any $P_1\geq
0$. Hence, by (\ref{necsufcondpospower2}), we must have $P_2>0$.
However, we either have $P_1>0$ or $P_1=0$ depending on whether
the value of $P_2$ satisfies
$h_1-\frac{g_1}{\left(1+g_2P_2\right)}>\lambda_1$ or not. We
obtain our power allocation policies as follows. We first solve
(\ref{optpospower1})-(\ref{optpospower2}), if this yields a real
and non-negative solution $(P_1,P_2)$, then we take it to be the
power allocation values for this case. Otherwise, we set $P_1=0$
and $P_2$ is obtained from (\ref{power2onlypos}).

\item $h_1-g_1>\lambda_1,\lambda_2<h_2\leq\lambda_2+g_2$. By the
symmetry between this case and the previous case, we must have
$P_1>0$ while we either have $P_2>0$\quad or \quad$P_2=0$. We
obtain our power allocation policies in a fashion similar to that
of case~$4$ and case~$5$. In particular, we first solve
(\ref{optpospower1})-(\ref{optpospower2}), if this yields a real
and non-negative solution $(P_1,P_2)$, then we take it to be the
power allocation values for this case. Otherwise, we set $P_2=0$
and $P_1$ is obtained from (\ref{power1onlypos}).

\item $h_1-g_1>\lambda_1,h_2-g_2>\lambda_2$. Here, we must have
$P_1>0$ and $P_2>0$. This is due to the fact that
$h_1-g_1>\lambda_1$ and $h_2-g_2>\lambda_2$ imply that
$h_1-\frac{g_1}{\left(1+g_2P_2\right)}>\lambda_1$ and
$h_2-\frac{g_2}{\left(1+g_1P_1\right)}>\lambda_2$, respectively.
\mbox{Hence, from
(\ref{necsufcondpospower1})-(\ref{necsufcondpospower2}), we must
have $P_1>0$ and $P_2>0$.} The values of $P_1$ and $P_2$ are given
by the positive common root $(P_1,P_2)$ of
(\ref{optpospower1})-(\ref{optpospower2}) which, in this case,
have only one positive common root.
\end{enumerate}

\section{Power Control for the ESA/CJ Scheme}\label{app2}

Here, we discuss the power allocation policy of
Section~\ref{optESACJ}.

For each channel state, since splitting power between transmission
and jamming is sub-optimal, we can distinguish between five
non-zero forms that the solution $(P_1,P_2,Q_1,Q_2)$ of
(\ref{lagrCJ1})-(\ref{lagrCJ4}) may take. First, if $P_1>0,P_2>0$
and $Q_1=Q_2=0$, then $\mu_1=\mu_2=0$. Hence, from
(\ref{lagrCJ1})-(\ref{lagrCJ2}), we conclude that $(P_1,P_2)$ is
the positive common root of equations
(\ref{optpospower1})-(\ref{optpospower2}) which are found in
Section~\ref{optESA} and are rewritten here:
\begin{align}
    h_1\left(1+g_2P_2\right)-g_1&=\lambda_1\left(1+h_1P_1\right)\left(1+g_1P_1+g_2P_2\right)\label{optpospower_again1}\\
    h_2\left(1+g_1P_1\right)-g_2&=\lambda_2\left(1+h_2P_2\right)\left(1+g_1P_1+g_2P_2\right)\label{optpospower_again2}
\end{align}
This root can be obtained through numerical solution. Secondly, if
$P_1>0,Q_2>0$ and $P_2=Q_1=0$, then $\mu_1=\nu_2=0$. Hence, from
(\ref{lagrCJ1}) and (\ref{lagrCJ3}), we conclude that $(P_1,Q_2)$
is the positive common root of
\begin{align}
h_1\left(1+g_2Q_2\right)-g_1&=\lambda_1\left(1+h_1P_1\right)\left(1+g_1P_1+g_2Q_2\right)\label{optposP1}\\
g_2g_1P_1&=\lambda_2\left(1+g_2Q_2\right)\left(1+g_1P_1+g_2Q_2\right)\label{optposQ2}
\end{align}
which can also be obtained through numerical solution. Thirdly, if
$P_2>0,Q_1>0$ and $P_1=Q_2=0$, then $\mu_2=\nu_1=0$. Hence, from
(\ref{lagrCJ2}) and (\ref{lagrCJ4}), we conclude that $(P_2,Q_1)$
is the positive common root of
\begin{align}
h_2\left(1+g_1Q_1\right)-g_2&=\lambda_2\left(1+h_2P_2\right)\left(1+g_1Q_1+g_2P_2\right)\label{optposP2}\\
g_1g_2P_2&=\lambda_1\left(1+g_1Q_1\right)\left(1+g_1Q_1+g_2P_2\right)\label{optposQ1}
\end{align}
which again can be obtained through numerical solution. The fourth
non-zero form of $(P_1,P_2,Q_1,Q_2)$ is when $P_1>0$ and
$P_2=Q_1=Q_2=0$, then $\mu_1=0$. Hence, from (\ref{lagrCJ1}),
$P_1$ is given by (\ref{power1onlypos}) which is found in
Section~\ref{optESA} and will be repeated here for convenience:
\begin{align}
P_1&=\frac{1}{2}\left(\sqrt{\left(\frac{1}{g_1}-\frac{1}{h_1}\right)^2+\frac{4}{\lambda_1}\left(\frac{1}{g_1}-\frac{1}{h_1}\right)}-\left(\frac{1}{g_1}+\frac{1}{h_1}\right)\right)\label{power1onlypos_again}
\end{align}
The last non-zero form of $(P_1,P_2,Q_1,Q_2)$ is when $P_2>0$ and
$P_1=Q_1=Q_2=0$, then $\mu_2=0$. Hence, from (\ref{lagrCJ2}),
$P_2$ is given by (\ref{power2onlypos}) in Section~\ref{optESA}
and is given here again.
\begin{align}
P_2&=\frac{1}{2}\left(\sqrt{\left(\frac{1}{g_2}-\frac{1}{h_2}\right)^2+\frac{4}{\lambda_2}\left(\frac{1}{g_2}-\frac{1}{h_2}\right)}-\left(\frac{1}{g_2}+\frac{1}{h_2}\right)\right)\label{power2onlypos_again}
\end{align}

We obtain the following sufficient conditions on zero jamming
powers $Q_1$ and $Q_2$. By subtracting (\ref{lagrCJ3}) from
(\ref{lagrCJ1}) and subtracting (\ref{lagrCJ4}) from
(\ref{lagrCJ2}), we get
\begin{align}
\frac{h_1}{1+h_1Q_1}-\frac{g_1}{1+g_1Q_1+g_2Q_2}+\mu_1-\nu_1&=0\label{CJ5}\\
\frac{h_2}{1+h_2Q_2}-\frac{g_2}{1+g_1Q_1+g_2Q_2}+\mu_2-\nu_2&=0\label{CJ6}
\end{align}
which, by using the fact that the two users cannot be jamming
together, give the following conditions
\begin{align}
&Q_1=0,\qquad\text{if}\quad h_1>g_1 \label{sufQ1zero}\\
&Q_2=0,\qquad\text{if}\quad h_2>g_2 \label{sufQ2zero}
\end{align}
Moreover, we obtain necessary and sufficient conditions for the
positivity of power allocations in the possible
transmission/jamming scenarios in each channel state. First, when
no user jams, i.e., $Q_1=Q_2=0$, then from
(\ref{lagrCJ1})-(\ref{lagrCJ2}), we obtain the necessary and
sufficient conditions
(\ref{necsufcondpospower1})-(\ref{necsufcondpospower1}) of
Section~\ref{optESA} which we repeat here for convenience.
\begin{align}
P_1>0,\qquad&\text{if and only if}\quad h_1
-\frac{g_1}{\left(1+g_2P_2\right)}>\lambda_1\label{necsufcondpospower_again1}\\
P_2>0,\qquad&\text{if and only if}\quad
h_2-\frac{g_2}{\left(1+g_1P_1\right)}>\lambda_2\label{necsufcondpospower_again2}
\end{align}
Secondly, when user $1$ does not jam and user $2$ does not
transmit, i.e., $Q_1=P_2=0$, then from (\ref{lagrCJ1}) and
(\ref{lagrCJ3}), we can easily derive the following necessary and
sufficient conditions for the positivity of the transmission power
$P_1$ of user $1$ and the jamming power $Q_2$ of \mbox{user $2$.}
\begin{align}
P_1>0,\qquad&\text{if and only if}\quad h_1-\frac{g_1}{\left(1+g_2Q_2\right)}>\lambda_1\label{necsufcondposP1}\\
Q_2>0,\qquad&\text{if and only if}\quad
g_2-\frac{g_2}{\left(1+g_1P_1\right)}>\lambda_2\label{necsufcondposQ2}
\end{align}
Thirdly, when user $1$ does not transmit and user $2$ does not
jam, i.e., $P_1=Q_2=0$, then from (\ref{lagrCJ2}) and
(\ref{lagrCJ4}), we can similarly derive the following necessary
and sufficient conditions for the positivity of the transmission
power $P_2$ of user $2$ and the jamming power $Q_1$ of \mbox{user
$1$}.
\begin{align}
P_2>0,\qquad&\text{if and only if}\quad h_2-\frac{g_2}{\left(1+g_1Q_1\right)}>\lambda_2\label{necsufcondposP2}\\
Q_1>0,\qquad&\text{if and only if}\quad
g_1-\frac{g_1}{\left(1+g_2P_2\right)}>\lambda_1\label{necsufcondposQ1}
\end{align}

Using conditions (\ref{sufQ1zero})-(\ref{necsufcondposQ1}) given
above, the power allocation policy $(P_1,P_2,Q_1,Q_2)$ that
satisfies (\ref{lagrCJ1})-(\ref{lagrCJ4}) and
(\ref{opt_constCJ1})-(\ref{opt_constCJ2}) can be fully described
through the following cases of the channel gains.

\begin{enumerate}
\item $h_1>g_1,h_2>g_2$. In this case, we must have $Q_1=Q_2=0$.
This follows directly from (\ref{sufQ1zero})-(\ref{sufQ2zero}).
Hence, this case reduces to one of the $7$ cases given in
Section~\ref{optESA} depending on the relative values of the
channel gains and the values of $\lambda_1$ and $\lambda_2$. We
can obtain the power allocations $P_1$ and $P_2$ in the same way
described in Section~\ref{optESA}.

\item $h_1>g_1,h_2<g_2$. In this case, we must have $P_2=Q_1=0$.
This can be shown as follows. From (\ref{sufQ1zero}), we must have
$Q_1=0$. Suppose $P_2>0$. Hence, $\mu_2=0$. Since dividing power
among transmission and jamming is suboptimal, then we must have
$Q_2=0$. Since $Q_1=0$, then (\ref{CJ6}) implies
$\bar{h}_2-\bar{g}_2\geq 0$ which is a contradiction. Therefore,
$P_2=0$. The power allocations $P_1$ and $Q_2$ are obtained from
one of the following sub-cases:
\begin{enumerate}
  \item $h_1\leq\lambda_1$ \quad or\quad $h_1-g_1\leq\lambda_1,g_2\leq\lambda_2$.
  We have $P_1=Q_2=0$. To see this, note that $h_1\leq \lambda_1$ implies that $h_1-\frac{g_1}{\left(1+g_2Q_2\right)}\leq\lambda_1$.
  Hence, using (\ref{necsufcondposP1}), we must have $P_1=0$ and thus $Q_2=0$ since we cannot have a jamming user when the other user
  is not transmitting. On the other hand, if $g_2\leq\lambda_2$, then it follows from (\ref{necsufcondposQ2}) that $Q_2=0$. Hence,
  the fact that $h_1-g_1\leq\lambda_1$ together with (\ref{necsufcondposP1}) implies that $P_1=0$.
  \item $h_1-g_1>\lambda_1,g_2\leq\lambda_2$. We have $Q_2=0$ and $P_1>0$ where $P_1$ is
  \mbox{given by (\ref{power1onlypos_again}).} This can be shown to be true as follows. Since $g_2\leq\lambda_2$, then,
  using (\ref{necsufcondposQ2}), we must have $Q_2=0$. Hence, from (\ref{necsufcondposP1}) and the fact that $h_1-g_1>\lambda_1$ in
  this case, we must have $P_1>0$.
  \item $\lambda_1<h_1\leq\lambda_1+g_1,g_2>\lambda_2$. In this case, the solution $(P_1,Q_2)$ may not be unique.
  Namely, we either have $P_1>0$ and $Q_2>0$, or we have $P_1=Q_2=0$. This is due to the following facts. It is easy to see that $P_1=Q_2=0$
  satisfies $h_1-\frac{g_1}{\left(1+g_2Q_2\right)}\leq\lambda_1$ and $g_2-\frac{g_2}{\left(1+g_1P_1\right)}\leq\lambda_2$, i.e. conditions
  (\ref{necsufcondposP1}) and (\ref{necsufcondposQ2}). It is also easy to see that we can find positive $P_1$ and $Q_2$ that satisfy
  $h_1-\frac{g_1}{\left(1+g_2Q_2\right)}>\lambda_1$ and $g_2-\frac{g_2}{\left(1+g_1P_1\right)}>\lambda_2$, i.e. conditions (\ref{necsufcondposP1})
  and (\ref{necsufcondposQ2}). Hence the solution $(P_1,Q_2)$ may not be unique. It remains to show that we cannot have $P_1>0, Q_2=0$.
  Suppose that $P_1>0$ and $Q_2=0$. Hence, we have $h_1-\frac{g_1}{\left(1+g_2Q_2\right)}=h_1-g_1\leq\lambda_1$ which, by (\ref{necsufcondposP1}),
  implies that $P_1=0$ which is a contradiction. Thus, we cannot have $P_1>0$ and $Q_2=0$. We obtain our power allocation policies for this case
  as follows. We examine the solution of equations (\ref{optposP1}) and (\ref{optposQ2}), if it yields a real and non-negative solution $(P_1,Q_2)$,
  then we take it as our solution $(P_1,Q_2)$ for this case. Otherwise, we set $P_1=Q_2=0$.
  \item $h_1-g_1>\lambda_1,g_2>\lambda_2$. Here, we must have $P_1>0$. However, we either have $Q_2>0$ or $Q_2=0$,
  i.e., the solution may not be unique. To see this, we note that $h_1-g_1>\lambda_1$ implies that $h_1-\frac{g_1}{\left(1+g_2Q_2\right)}>\lambda_2$
  for any $Q_2\geq 0$. Hence, by (\ref{necsufcondposP1}), we must have $P_1>0$. However, we either have $Q_2>0$ or $Q_2=0$ depending on whether the
  value of $P_1$ satisfies $g_2-\frac{g_2}{\left(1+g_1P_1\right)}>\lambda_1$ or not. We obtain our power allocation policy as follows.
  We first solve (\ref{optposP1}) and (\ref{optposQ2}), if this yields a real and non-negative solution $(P_1,Q_2)$, then we take it to
  be the power allocation values for this case. Otherwise, we set $Q_2=0$ and $P_1$ is obtained from (\ref{power1onlypos_again}).
\end{enumerate}

\item $h_1<g_1,h_2>g_2$. From the symmetry between this case and
the previous case, the power allocation roles can be obtained in
this case by interchanging the power allocation roles of users $1$
and $2$ in the previous case. In particular, we must have
$P_1=Q_2=0$. The power allocations $P_2$ and $Q_1$ are given by
one of the following sub-cases:
\begin{enumerate}
  \item $h_2\leq\lambda_2$\quad or\quad $g_1\leq\lambda_1,h_2-g_2\leq\lambda_2$. We have $P_2=Q_1=0$.
  \item $g_1\leq\lambda_1,h_2-g_2>\lambda_2$. We have $Q_1=0$ and $P_2>0$ where $P_2$ is \mbox{given by
  (\ref{power2onlypos_again}).}
  \item $g_1>\lambda_1,\lambda_2<h_2\leq\lambda_2+g_2$. In this case, the solution $(P_2,Q_1)$ may not be
  unique as we either have $P_2>0$ and $Q_1>0$, or have $P_1=Q_2=0$. Therefore, we obtain our power allocation policy for this case by
  numerically solving equations (\ref{optposP2}) and (\ref{optposQ1}), if we have a real and non-negative solution $(P_2,Q_1)$, then we take
  it as to be the power allocation values for this case. Otherwise, we set $P_2=Q_1=0$.
  \item $g_1>\lambda_1,h_2-g_2>\lambda_2$. Here, we must have $P_2>0$. However, we either have $Q_1>0$ or $Q_1=0$,
  i.e., the solution may not be unique. We obtain our power allocation policy as follows. We first solve (\ref{optposP2})-(\ref{optposQ1}),
  if this yields a real and non-negative solution $(P_2,Q_1)$, then we take it to be the power allocation values for this case. Otherwise,
  we set $Q_1=0$ and $P_2$ is obtained from (\ref{power2onlypos_again}).
\end{enumerate}

\item $h_1<g_1,h_2<g_2$. In this case, we have $P_2=Q_1=0$ or
$P_1=Q_2=0$. In order to see this, suppose $P_1>0$ and $P_2>0$.
Hence, $\mu_1=\mu_2=0$. Since splitting a user's power into
transmit and jamming powers is suboptimal, then we must have
$Q_1=Q_2=0$. Thus, from (\ref{CJ5}) and (\ref{CJ6}), we have
$\bar{h}_1\geq\bar{g}_1$ and $\bar{h}_2\geq\bar{g}_2$ which is a
contradiction. Therefore, we must have either $P_1=0$ or $P_2=0$.
The power allocation policy $(P_1,P_2,Q_1,Q_2)$ is given in the
following four sub-cases of channel states:
\begin{enumerate}
  \item ($h_1\leq\lambda_1$\quad or\quad $g_2\leq\lambda_2$) and ($h_2\leq\lambda_2$\quad or\quad $g_1\leq\lambda_1$).
  In this case, we have $P_1=P_2=Q_1=Q_2=0$. To see this, first, suppose that $P_2=Q_1=0$. We note that if $h_1\leq \lambda_1$ then
  $h_1-\frac{g_1}{\left(1+g_2Q_2\right)}\leq\lambda_1$. Hence, using (\ref{necsufcondposP1}), we must have $P_1=0$ and thus $Q_2=0$
  since we cannot have a jamming user when the other user is not transmitting. On the other hand, if $g_2\leq\lambda_2$, then it follows
  from (\ref{necsufcondposQ2}) that $Q_2=0$. Hence, the fact that $h_1<g_1$ together with (\ref{necsufcondposP1}) implies that $P_1=0$.
  Next, suppose that $P_1=Q_2=0$. Using the fact that $h_2\leq\lambda_2$ or $g_1\leq\lambda_1$ together with condition
  (\ref{necsufcondposP2})-(\ref{necsufcondposQ1}), we can show that $P_2=Q_1=0$. Therefore, in this case, we must have $P_1=P_2=Q_1=Q_2=0$.
  \item ($h_2\leq\lambda_2$\quad or\quad $g_1\leq\lambda_1$) and ($h_1>\lambda_1,g_2>\lambda_2$). We have $P_2=Q_1=0$.
  The solution $(P_1,Q_2)$ may not be unique. In particular, we may have $P_1>0,Q_2>0$ or have $P_1=Q_2=0$. To see this, consider the following argument.
  Using the fact that $h_2\leq\lambda_2$ or $g_1\leq\lambda_1$, then, as shown in case~$4(a)$, we conclude that we must have $P_2=Q_1=0$.
  Now, we consider the power allocation policy $(P_1,Q_2)$. We note that $P_1=Q_2=0$ satisfies conditions (\ref{necsufcondposP1}) and
  (\ref{necsufcondposQ2}). On the other hand, we can find positive $P_1$ and $Q_2$ that satisfy (\ref{necsufcondposP1}) and (\ref{necsufcondposP1}).
  Hence, the solution $(P_1,Q_2)$ may not be unique as we may have $P_1=Q_2=0$ or $P_1>0,Q_2>0$. It remains to show that we cannot have $P_1>0,Q_2=0$.
  Suppose that $P_1>0$ and $Q_2=0$. Hence, we have $h_1-\frac{g_1}{\left(1+g_2Q_2\right)}=h_1-g_1<0<\lambda_1$ which, by (\ref{necsufcondposP1}),
  implies that $P_1=0$ which is a contradiction. Thus, we cannot have $P_1>0$ and $Q_2=0$. Our power allocations $P_1$ and $Q_2$ are obtained for this
  case as follows. We solve (\ref{optposP1}) and (\ref{optposQ2}). If the solution gives a real and non-negative common root $(P_1,Q_2)$,
  we take it as our power allocation values for $P_1$ and $Q_2$. Otherwise, we set $P_1=Q_2=0$.
  \item ($h_1\leq\lambda_1$ \quad or\quad $g_2\leq\lambda_2$) and ($h_2>\lambda_2,g_1>\lambda_1$). By the symmetry between
  this case and case~$4(b)$, we have $P_1=Q_2=0$. Again in this case, the solution $(P_2,Q_1)$ may not be unique. In particular, we may have
  $P_2>0,Q_1>0$ or have $P_2=Q_1=0$. In fact, the power allocation policy in this case, can be obtained from case~$4(b)$ by interchanging
  the roles of \mbox{users $1$ and $2$.} Our power allocations $P_2$ and $Q_1$ are obtained as follows in this case. We solve
  (\ref{optposP2})-(\ref{optposQ1}). If the solution gives a real and non-negative common root $(P_2,Q_1)$, we take it as our power allocation
  values for $P_2$ and $Q_1$. Otherwise, we set $P_2=Q_1=0$.
  \item ($h_1>\lambda_1,g_2>\lambda_2$) and ($h_2>\lambda_2,g_1>\lambda_1$). Here, again the solution $(P_1,P_2,Q_1,Q_2)$
  is not unique as we may either have $P_1>0,Q_2>0,P_2=Q_1=0$, or $P_2>0,Q_1>0,P_1=Q_2=0$, or $P_1=P_2=Q_1=Q_2=0$. To see this, first, suppose that
  $P_2=Q_1=0$ and consider the power allocation policy $(P_1,Q_2)$. As in case~$4(b)$, we can show that the solution $(P_1,Q_2)$ may not be
  unique as we may have $P_1=Q_2=0$ or $P_1>0,Q_2>0$. However, as shown in case~$4(b)$, we cannot have $P_1>0,Q_2=0$. Next, suppose that
  $P_1=Q_2=0$ and consider the power allocation policy $(P_2,Q_1)$. As in case~$4(c)$, we can show that the solution $(P_2,Q_1)$ may not be
  unique as we may have $P_2=Q_1=0$ or $P_2>0,Q_1>0$. However, we cannot have $P_2>0,Q_1=0$. We obtain our allocation policy $(P_1,P_2,Q_1,Q_2)$
  as follows. Let us denote the solution of (\ref{optposP1}) and (\ref{optposQ2}) together by \emph{solution $A$} and denote the solution of
  (\ref{optposP2}) and (\ref{optposQ1}) together by \emph{solution $B$}.
      \begin{enumerate}
      \item If solution $A$ yields a real non-negative $(P_1,Q_2)$ while solution $B$ does not yield real non-negative $(P_2,Q_1)$, then we take
      $(P_1,Q_2)$ to be the power allocation values for users $1$ \mbox{and $2$}, respectively, and set $P_2=Q_1=0$.
      \item If solution $B$ yields a real non-negative $(P_2,Q_1)$ while solution $A$ does not yield real non-negative $(P_1,Q_2)$, then we take
      $(P_2,Q_1)$ to be the power allocation values for users $2$ \mbox{and $1$}, respectively, and set $P_1=Q_2=0$.
      \item If neither solution $A$ nor solution $B$ gives real non-negative common root, then we set $P_1=P_2=Q_1=Q_2=0$.
      \item If both solutions $A$ and $B$ yield a real non-negative common root, then we either choose the root given by solution $A$, i.e.,
      $(P_1,Q_2)$, and set $P_2=Q_1=0$, or choose the root given by solution $B$, i.e., $(P_2,Q_1)$, and set $P_1=Q_2=0$. We make the choice that
      maximizes the achievable \emph{instantaneous} secrecy sum rate.
      \end{enumerate}
\end{enumerate}
\end{enumerate}

\bibliographystyle{unsrt}

\end{document}